\numberwithin{equation}{section}
\newlist{longenum}{enumerate}{1}
\setlist[longenum,1]{label=\alph*)}
\DeclareMathOperator*{\argmax}{argmax}
\title{Extracting Common Time Trends from Concurrent Time Series: Maximum Autocorrelation Factors with Applications}
\author{Matz A. Haugen%
  \thanks{Electronic address: \texttt{mahaugen@stanford.edu}}}
\author{Bala Rajaratnam}
\author{Paul Switzer}
\affil{Stanford University}
\begin{document}

\maketitle
\abstract{

Concurrent time series commonly arise in various applications, including when monitoring the
environment such as in air quality measurement networks, weather
stations, oceanographic buoys, or in paleo form such as lake
sediments, tree rings, ice cores, or coral isotopes, with each
monitoring or sampling site providing one of the time series. The goal
in such applications is to extract a common time trend or signal in the
observed data. Other examples where the goal is to extract a common
time trend for multiple time series are in stock price time series,
neurological time series, and
quality control time series. For this purpose we develop properties of MAF
[Maximum Autocorrelation Factors] that linearly
combines time series in order to maximize the resulting
SNR [signal-to-noise-ratio] where there are multiple smooth signals present in the data. Equivalence is established in a regression setting between MAF and CCA [Canonical Correlation Analysis] even though MAF does not require specific signal knowledge as opposed to CCA. We proceed to derive the theoretical properties of MAF and quantify the SNR advantages of MAF in
comparison with PCA [Principal Components Analysis], a commonly used
method for linearly combining time series, and compare their
statistical sample properties. MAF and PCA are then applied to real and simulated data sets to illustrate MAFs efficacy.
}
\section{Introduction and Preliminaries}
\label{sec:intro}
A common goal in the analysis of a collection of $p$ concurrent time
series $Z_j(t)$, $j=1, \dots, p$, observed at times $t=1,\dots, n$, is
to extract a common time trend which we refer to as the signal. Specifically,
we look at optimizing a linear combination $\bm Y(t) = \bm w' \bm
Z(t)$, $t=1, \dots, n$, where $\bm w$ is an optimized coefficient $p$-vector. For example, if the goal is to maximize variance over time of the
combined series, $Y(t)$, then this is equivalent to finding the
first principal component in a PCA (Principal Component
Analysis). Then the coefficient vector $\bm w_{PCA}$ is the principal eigenvector
of the cross-covariance matrix, $\bm S$, where $\bm S_{ij}$  is the covariance over time between the
pair of time series $Z_i(t)$ and $Z_j(t)$. The idea of PCA is to reduce dimensionality through retaining linear combinations of the data which have the highest variability. Some applications of PCA to
multiple time series analysis are given in \citet{li07,
  Briffa2008,Mcshane,JB2012} Find references outside earth sciences. However, maximizing variance across time, as PCA seeks to do, will not
  necessarily be well suited to revealing coherent underlying latent
  time trends because PCA does not make use of the specific time order
  of the data or optimize any property dependent on temporal coherence. If the time order of the time series were permuted,
say, then the covariance matrix $\bm S$ and the coefficient vector
$\bm w_{PCA}$ are unchanged.

Arguably, an optimization criterion for the coefficient vector $\bm w$
for combining the $p$ concurrent time series should
specifically maximize a measure of temporal coherence of the
transformed time series, rather than the time variance used in PCA.

\subsection{MAF - Maximum Autocorrelation Factors}

 An alternative to PCA is Maximum Autocorrelation Factors (MAF) \citep{switzer84, switzer89} where variance
maximization is replaced by autocorrelation maximization, which
explicitly does depend on the time ordering of the $p$-variate
observations. The motivation for MAF is that smoothly evolving time
trends contained in time series data will enhance autocorrelation. We show in Appendix
\ref{app:proofs} that the
MAF-optimized coefficient vector $\bm w_{MAF}$ is obtained as the
leading eigenvector of the matrix

\begin{equation}
  \label{eq:15}
 \bm S^{-1/2} \bm S_{\Delta}
\bm S^{-1/2},
\end{equation}
where $\bm S_\Delta$ is the $p\times p$ covariance matrix of
the time-differenced time series. Any rescaling of the original time
series, $\bm Z(t)$, will preserve the MAF time series. This invariance property for MAF is also derived in Appendix
\ref{app:proofs}. On the other hand, PCA component time series are not invariant to rescaling or recombining of the original data.

Some applications of MAF to
multiple time series analysis are given in \citet{switzer84,
  switzer89, Gallagher2014}. Our interest in MAF derives from
applications to the analysis of multiple time series of climate proxy
data from tree ring measurements, described in
\secRef{sec:applications}. A fuller discussion of the analysis of tree
ring data will be presented in a separate paper. {In this paper we shall focus on the methodological development of the MAF framework.}

To intuitively appreciate the difference between MAF and PCA, suppose
we have $p=2$ time series, one that is pure white noise and the other
that is a linear time trend without noise, with both series having unit
variance over time. {Since PCA looks for a combined time series with maximal variance, it} is indifferent between the noisy time
series with zero autocorrelation and the clean time series with unit
autocorrelation. {On the other hand}, MAF will put all its weight on the noiseless
linear time trend. If the two original time series contained each a
mixture of time trend and noise, then the MAF time series will amplify
the time trend relative to the noise.

\subsection{An Illustration}

Figure \ref{simpleExample} shows an example with four parallel time series, rescaled to
have zero mean and unit variance. These 150-year time series are
extracted from the database used in \cite{Mann2008} and represent tree-ring time series. {To measure temporal coherence we introduce an empirical signal-to-noise ratio (SNR), which is obtained by taking the ratio of two standard deviations; that of a smoothed version of the time series and that of the associated residuals after the smooth has been subtracted from the original. Standard deviations are calculated by summing over the time steps.  The annotated empirical SNR} suggest
that the first two time series exhibit more evident temporal structure
than the last two time series. The corresponding PCA and MAF time
series are shown in \fig{simpleExampleMAFs}, and these are also rescaled to have
zero mean and unit variance. The MAF time series appears to
concentrate the temporal structure whereas PCA seems to exhibit more
temporal noise. The
empirical SNR of the MAF time series is $1.46$ while that of the
PCA time series is $0.92$. PCA and MAF coefficient matrices are shown
in Table 1 and we see that the MAF time series up-weights the first two data time series and
down-weights the last two data time series.

\EPSFIG[scale=0.60]{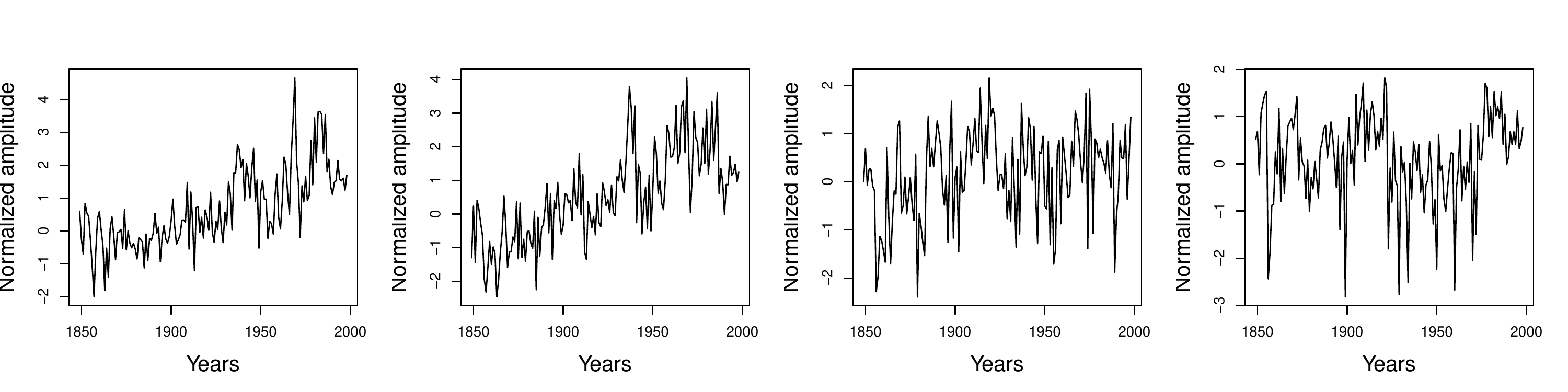}{Four tree ring time series,
  each one scaled to have unit
  variance and zero mean. Autocorrelation is annotated above each figure.}{simpleExample}

\EPSFIG[scale=0.65]{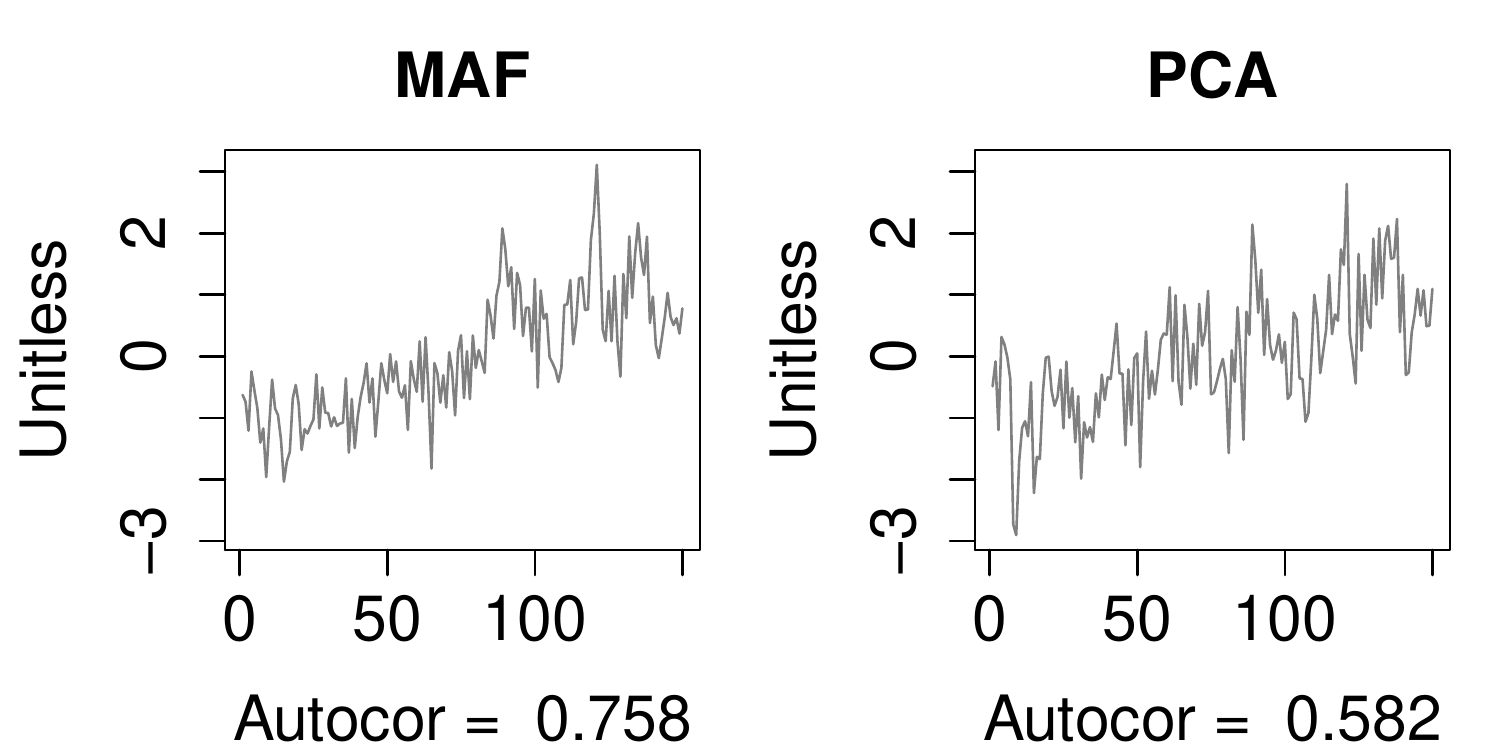}{MAFs and PCs of the time
  series shown in \fig{simpleExample}.}{simpleExampleMAFs}

\begin{table}[ht]
\centering
\begin{tabular}{rrr}
  \hline
 & MAF & PCA \\
  \hline
1 & 0.80 & 0.59 \\
  2 & 0.30 & 0.58 \\
  3 & 0.24 & 0.42 \\
  4 & -0.47 & 0.37 \\
   \hline
\end{tabular}
\caption{MAF and PCA coefficients of 4 time series.}
\label{MAFPCACoefs}
\end{table}

\subsection{Summary of results}
In \secRef{sec:snModel}, we introduce the signal-plus-noise model and
show that under general conditions, {the MAF time series
yields the highest signal-to-noise ratio among all possible combined time series}. Equivalently, MAF also maximizes the correlation
between the combined time series and the underlying signal time
series. The PCA time series, on the other hand, maximizes
signal \emph{plus} noise variance rather than the ratio. PCA does not {generally} share the MAF ``oracle property'', {i.e. finding the linear combination of time series which is maximally correlated with the underlying signal}. We show that the SNR of the MAF time series is equal or greater than that of the PCA time series in all situations involving one {or more} signals. Only in the trivial setting when the noise is \emph{iid}, i.e. with zero cross-correlation and equal variance are MAF and PCA equivalent. Otherwise, MAF
{increases} the SNR compared to PCA.

{We then extend the model to having $q\leq p$ multiple signals, where we establish that first $q$ MAFs and Canonical Correlation Factors (CCFs) span the same subspace that contains any linear combination of the underying signal time series, thus extending the ``oracle property'' to the case of multiple signals. Consequently, in a regression setting with one response time series and a set of predictor time series, where the latter contains multiple signals, MAF regression with $q$ factors will be optimal in a `least squares' sense. It is assumed that the response signal is a particular linear combination of the underlying set of signals present in the predictors. On the other hand, since the first $q$ Principal Components (PCs) do not span the subspace of signals, their regression on the response will be suboptimal in the least squares sense.}

In \secRef{sec:illustrations}, a specific
illustration is given where two groups of time series are considered,
each with different signal strengths present in combination with noise. Explicit expressions are given for both
MAF and PCA where we replace the sample covariance matrices by their expected values under ther model.  We then derive the explicit form of the MAF and PCA
coefficients vectors in {other models. Doing so allows us to investigate how the coefficients change as functions of the noise cross-correlation, relative signals strengths contained in each time series, and total number of time series. We find that the leading MAF SNR improves compared to PCA as noise cross-correlation, number of time series and/or signal strength differences increase(s).}

\secRef{sec:samplingProperties} explores the
statistical properties of MAF and shows that MAF coefficient
estimates are consistent as the number of time steps are increased
while keeping the number of time series constant. Illustrations are
also given to quantify the difference between MAF and PCA regarding
their correlations with the underlying time trend. To determine the
presence of a signal in the data, a hypothesis testing procedure is
presented where the null hypothesis is a pure noise time series. Using resampling, we illustrate the power of the
test at different sample sizes and significance levels.

Application to tree ring time series in western North
America is shown in \secRef{sec:applications}. We illustrate MAF and
PCA for these time series. Both MAF and PCA suggest underlying
common time trends, but MAF appears to show these trends more
clearly. A null hypothesis
test is highly significant and suggests the presence of time trends in
the data. Concluding remarks are presented in
\secRef{sec:conclusion}.


\section{The signal-plus-noise model}
\label{sec:snModel}
\subsection{Preliminaries}
We now formally define the Maximum Autocorrelation Factor (MAF). For a given set of $p$ observed concurrent time series, $\bm Z(t): \mathbb{N} \rightarrow \R^p$, the leading MAF coefficient vector, is defined as the linear combination of the time series in $\bm Z(t)$ such that
\begin{equation}
\label{eq:prelim1}
  \bm w_{MAF}(\bm Z) \equiv \bm w_{MAF} = \argmax_{\bm w\in \R^p} \{\text{Cor}(\bm w' \bm Z(t), \bm w' \bm Z(t+1))\}.
\end{equation}
Similary, the leading Principal Component coefficient vector are defined as
\begin{equation}
 \bm w_{PCA}(\bm Z) \equiv \bm w_{PCA} =  \argmax_{\bm w\in \R^p} \{\text{Var}(\bm w' \bm Z(t))\}.
\end{equation}

Note that the MAF yields the optimal linear combination such that the autocorrelation is maximized while the PC yields the linear combination that maximizes variance. Furthermore, the leading MAF factor is defined as follows,
\begin{equation}
   \bm Y_{MAF}(\bm Z (t)) = \bm w'_{MAF}\bm Z(t) \text{ for } t = 1,..., n.
 \end{equation}
This single time series is the linear combination of the original $p$ time series with maximal autocorrelation. With these definitions, we now proceed to derive various properties related to these two techniques.
\subsection{The model}
Suppose $f(t): \mathbb{N} \rightarrow \R$ is a fixed but unknown normalized underlying signal time series
with zero mean and {Euclidian norm} equal to 1 over the observation period $t=1,
..., n$. We have $p$ observed concurrent time series, $\bm Z(t): \mathbb{N} \rightarrow \R^p$, that are represented as

\begin{equation*}
\bm Z(t) =  \bm s(t) + \bm\varepsilon(t) = f(t) \cdot \bm b  + \bm\varepsilon(t),
\end{equation*}
\begin{equation}
\text{i.e. }\quad \sum_tf(t) =  0 \text{  and  } \sum_tf^2(t) = 1,
\label{eq:model}
\end{equation}
\begin{equation*}
\text{with} \: E[\bm\varepsilon(t)] = 0 \:\text{and
  } Var[\bm\varepsilon(t)] = \bm \Sigma_\varepsilon, \quad \forall t
\end{equation*}
 where $\bm \varepsilon(t)$ is the random $p$-variate covariance-stationary residual noise time series and $\bm b = (b_1,b_2, \dots, b_p)$ is a
 coefficient vector, {fixed and unknown}. The quantities
 $f(t)$, $\bm \varepsilon(t)$ and $\bm b$ are all unobserved and
 unknown. We call this the `S+N model'. A linear combination of the $p$
observed times series $\bm Z(t)$ is another time series $\bm w'\bm
Z(t)$, with $\bm w' \in \R^p$. The signal-to-noise ratio for the combined time series is
denoted $SNR(\bm w)$ and defined as
\begin{align}
\label{eq:R}
\frac{\text{Average signal mean
    square}}{\text{Average noise variance}} =& \text{SNR}(\bm w)
      := \frac{\frac{1}{n} \sum_{t=1}^n Var[\bm w' \bm s(t)]} {\frac{1}{n} \sum_{t=1}^n Var[\bm w' \bm \varepsilon(t)]} \nonumber\\
      =& \frac{\frac{1}{n}\sum_{t=1}^n f^2(t)[(\bm w' \bm b)^2]}{\frac{1}{n}\sum_{t=1}^n\bm w'\Sigma_\varepsilon \bm w} \nonumber\\
      =& \frac{(\bm w'\bm b)^2}{\bm w'\Sigma_\varepsilon \bm w}.
\end{align}

The MAF and PCA time series are examples of such linearly combined
time series with particular choices for $\bm w$.

Now define
\begin{equation}
\label{eq:kfke}
 \frac{1}{n-1}\sum_{t=1}^{n-1}\varepsilon(t)\varepsilon(t+1) = k_\varepsilon, \;\;\text{and  } \frac{1}{n-1}\sum_{t=1}^{n-1}f(t)f(t+1) = k_f,
\end{equation}
where $k_f$ may be regarded as a
measure of signal coherence. We now show
conditions under which the MAF time series maximizes SNR$(\bm w)$ over
$\bm w$.

\begin{prop}
\label{propMonotone}
Suppose that the stationary time series model for
the residual noise is such that the $p\times p$ residual
autocovariance matrix has the proportional form,
\begin{align}
  \label{eq:11}
 \text{Cov}(\bm\varepsilon(t), \bm\varepsilon(t+1))  =
 k_\varepsilon \text{Cov}(\bm\varepsilon(t)) =
k_\varepsilon \bm \Sigma_\varepsilon, \quad \text{for some }k_\varepsilon \in \R,
\end{align}
such that
\begin{equation}
  \label{eq:17}
  \text{Var}(\bm w' \bm Z(t), \bm w' \bm Z(t+1))  =  (\bm w'\bm b)^2 k_f  +
k_\varepsilon \bm w'\Sigma_{\bm \varepsilon} \bm w , \quad\text{with } k_f > k_\varepsilon,
\end{equation}
where  $k_f$ is the lag-1 autocorrelation of a normalized signal
 $f(t)$, as given in Equation \ref{eq:kfke}. Then MAF maximizes S/N and PCA maximizes S+N, i.e.,
\begin{align}
   \bm w_{MAF} =& \argmax_{\bm w \in \R^p} \frac{(\bm w' \bm b)^2}{\bm w' \bm \Sigma_\varepsilon \bm w}, \;\; \text{and} \nonumber\\
   \bm w_{PCA} =& \argmax_{\bm w \in \R^p} \{(\bm w' \bm b)^2 + \bm w' \bm \Sigma_\varepsilon \bm w\}.
\end{align}
\end{prop}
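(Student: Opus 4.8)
The plan is to obtain closed forms, under the S+N model together with the proportionality hypothesis \eqref{eq:11}, for the two quantities being optimized---the time variance of the combined series $\bm w'\bm Z(t)$, which $\bm w_{PCA}$ maximizes, and its lag-$1$ autocorrelation, which $\bm w_{MAF}$ maximizes---and then to read off both identities. The PCA half will be immediate; the MAF half will follow from the single observation that the autocorrelation is a strictly increasing function of $\mathrm{SNR}(\bm w)$ alone, so the two functionals share their maximizers.

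First I would record the second-order moments of $\bm w'\bm Z(t)$ under the model. Substituting $\bm Z(t)=f(t)\bm b+\bm\varepsilon(t)$ and using $\sum_t f(t)=0$, $\sum_t f^2(t)=1$, $E[\bm\varepsilon(t)]=\bm 0$ and $\mathrm{Var}[\bm\varepsilon(t)]=\bm\Sigma_\varepsilon$---exactly the manipulations already used to derive \eqref{eq:R}---gives
\[
 \mathrm{Var}\bigl(\bm w'\bm Z(t)\bigr)=(\bm w'\bm b)^2+\bm w'\bm\Sigma_\varepsilon\bm w=\bm w'\bigl(\bm b\bm b'+\bm\Sigma_\varepsilon\bigr)\bm w .
\]
Since this is exactly the quadratic form whose (unit-norm) maximizer defines $\bm w_{PCA}$, the PCA identity in the proposition follows at once. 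By stationarity of $\bm\varepsilon$ the same computation yields $\mathrm{Var}(\bm w'\bm Z(t+1))=\mathrm{Var}(\bm w'\bm Z(t))$, and by \eqref{eq:11} the lag-$1$ autocovariance is $(\bm w'\bm b)^2k_f+k_\varepsilon\bm w'\bm\Sigma_\varepsilon\bm w$, i.e.\ \eqref{eq:17}; hence
\[
 \mathrm{Cor}\bigl(\bm w'\bm Z(t),\bm w'\bm Z(t+1)\bigr)=\frac{(\bm w'\bm b)^2k_f+k_\varepsilon\bm w'\bm\Sigma_\varepsilon\bm w}{(\bm w'\bm b)^2+\bm w'\bm\Sigma_\varepsilon\bm w}.
\]

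For the MAF identity I would set $A(\bm w)=(\bm w'\bm b)^2\ge 0$ and $B(\bm w)=\bm w'\bm\Sigma_\varepsilon\bm w$, take $\bm\Sigma_\varepsilon\succ 0$ so that $B(\bm w)>0$ for $\bm w\neq\bm 0$, and note that $\mathrm{Cor}$ and $\mathrm{SNR}$ are both scale invariant, so the optimization may be restricted to $\|\bm w\|=1$, a compact set on which the maxima are attained by continuity. Dividing numerator and denominator of the autocorrelation by $B(\bm w)$ and recalling $\mathrm{SNR}(\bm w)=A(\bm w)/B(\bm w)$ from \eqref{eq:R} gives
\[
 \mathrm{Cor}\bigl(\bm w'\bm Z(t),\bm w'\bm Z(t+1)\bigr)=h\bigl(\mathrm{SNR}(\bm w)\bigr),\qquad h(r):=\frac{k_f r+k_\varepsilon}{r+1}.
\]
Since $h'(r)=(k_f-k_\varepsilon)/(r+1)^2>0$---this is the one place the hypothesis $k_f>k_\varepsilon$ enters---the function $h$ is strictly increasing on $[0,\infty)$, which contains the range of $\mathrm{SNR}$. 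Consequently $\bm w$ maximizes the lag-$1$ autocorrelation if and only if it maximizes $\mathrm{SNR}(\bm w)=(\bm w'\bm b)^2/(\bm w'\bm\Sigma_\varepsilon\bm w)$, which is the MAF identity; the common maximizer is $\bm w\propto\bm\Sigma_\varepsilon^{-1}\bm b$.

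I do not expect a genuinely hard step: the whole argument rests on the one observation that, under \eqref{eq:11}, the autocorrelation of the combined series depends on $\bm w$ only through the scalar $\mathrm{SNR}(\bm w)$, and does so monotonically. It is worth flagging precisely where \eqref{eq:11} is indispensable---without the noise lag-covariance being a multiple of $\bm\Sigma_\varepsilon$, the numerator would carry an independent quadratic form $\bm w'\bm\Gamma\bm w$ and would fail to collapse to a function of $\mathrm{SNR}$. The two minor technicalities to be careful about are (i) the identity $\mathrm{Var}(\bm w'\bm Z(t+1))=\mathrm{Var}(\bm w'\bm Z(t))$, which uses stationarity of $\bm\varepsilon$ together with $f$ being normalized over the whole window (otherwise one picks up an $O(1/n)$ boundary term proportional to $f^2(n)-f^2(1)$); and (ii) the nondegeneracy $\bm\Sigma_\varepsilon\succ 0$, needed so the denominators stay bounded away from $0$---were $\bm\Sigma_\varepsilon$ only positive semidefinite, $\mathrm{SNR}$ could be $+\infty$ on its kernel and the MAF optimum would sit there as well.
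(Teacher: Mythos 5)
Your proposal is correct and follows essentially the same route as the paper: compute $\mathrm{Var}(\bm w'\bm Z(t))=(\bm w'\bm b)^2+\bm w'\bm\Sigma_\varepsilon\bm w$ to get the PCA identity, then show the lag-1 autocorrelation equals $(k_f\,\mathrm{SNR}(\bm w)+k_\varepsilon)/(\mathrm{SNR}(\bm w)+1)$, a strictly increasing function of $\mathrm{SNR}(\bm w)$ precisely when $k_f>k_\varepsilon$, so the maximizers coincide. Your added care about $\bm\Sigma_\varepsilon\succ 0$, scale invariance, and where \eqref{eq:11} is indispensable only makes explicit what the paper leaves implicit.
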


\begin{proof}
 We show that maximizing SNR$(\bm w)$ over linear
combinations if $\bm w$ is equivalent to maximizing the lagged autocorrelation, denoted $r(\bm w)$, of the combined time series $\bm w'\bm Z(t)$. Now define the following,
\begin{align}
\label{eq2}
r(\bm w) =& \frac{Cov(\bm w' \bm Z(t), \bm w' \bm Z(t+1)}{Cov(\bm w' \bm Z(t))} =
1 -  \frac{\text{Var}(\bm w'\Delta \bm Z(t))}{2 \text{Var}(\bm w'\bm
  Z(t))}, \nonumber \\
\Delta \bm Z(t) =&  \bm Z(t)-\bm Z(t-1) = \bm b' \Delta f(t) + \Delta
\bm\varepsilon(t) \quad\text{is the time differenced data vector,}\nonumber \\
\Delta f(t) =& f(t)-f(t-1) \quad\text{is the time-differenced signal,}
\nonumber \\
\Delta
\bm\varepsilon (t) =& \bm\varepsilon(t)-\bm\varepsilon(t-1)
\quad\text{is the time differenced noise vector.}
\end{align}

We can write
\begin{align}
\label{eq3}
\text{Var}(\bm w’\bm Z(t))   =& (\bm w'\bm b)^2  +  \bm w' \Sigma_{\bm
  \varepsilon} \bm w.
\end{align}

 Using \eqref*{eq:R}, \eqref*{eq2}, \eqref*{eq3} we can express the
 model autocorrelation, $r(\bm w)$, of the combined time series $\bm w'\bm Z(t)$ as
\begin{equation}
\label{eq:monotoneEq}
r(\bm w) =  \frac{\text{SNR}(\bm w) \cdot k_f + k_\varepsilon}{\text{SNR}(\bm
  w) + 1}
\end{equation}
which is a monotone
function of SNR$(\bm w)$, if $k_f >
k_\varepsilon$. Hence, maximizing $r(\bm w)$ is
equivalent to maximizing SNR$(\bm w)$. Since MAF maximizes
autocorrelation, MAF will also maximize the signal-to-noise variance ratio over combinations of $p$ observable cross-correlated time series, where each observable time series is a sum of a signal contribution and a random noise contribution.

PCA, one the other hand, is defined as
\begin{equation}
  \label{eq:43}
  \bm w_{PCA} := \argmax_{\bm w\in \R^p} \{\text{Var}(\bm w' \bm Z(t))\} = \argmax_{\bm w\in \R^p} \{Var[\bm w'(\bm b'f(t) + \bm \varepsilon(t))]\} =   \argmax_{\bm w\in \R^p} \{(\bm w'\bm b)^2 + \bm w' \Sigma_\varepsilon \bm w\}.
\end{equation}
\end{proof}

The above theorem has important consequences. In signal extraction, maximizing SNR$(\bm w)$ is arguably more
desirable than maximizing overall variance of a linear combination of the
input time series as in PCA. The MAF optimization criterion is clearly
more suited to the goal of extracting a common signal component from
multiple time series. It is also important to note that the
MAF time series is invariant to any rescaling of the input time
series, shown in Appendix \ref{app:proofs}, whereas the PCA time series is scale dependent.

We now proceed to state the theoretical properties of MAF time series in terms of four lemmas. First, we show that the MAF time series is maximally correlated with the underlying
  signal time series under the S+N model. This property is fundamentally important and is henceforth referred to as the ``oracle property'' of MAF.

\begin{lem}
\label{lem:monotone}
Consider the model given in Proposition \ref{propMonotone}, then
  \begin{equation}
    \bm w_{MAF} = \argmax_{\bm w \in \R^p} \;\;Cor[f(t), \bm w' \bm Z(t)]
  \end{equation}
\end{lem}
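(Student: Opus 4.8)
The plan is to write $Cor[f(t), \bm w' \bm Z(t)]$ as an explicit function of $\bm w$ under the S+N model and to recognize it as a strictly increasing transformation of $\text{SNR}(\bm w)$; the conclusion then follows at once from Proposition~\ref{propMonotone}, which has already identified $\bm w_{MAF}$ as the maximizer of $\text{SNR}(\bm w)$.

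Concretely, I would first evaluate the numerator. Since $\bm Z(t) = f(t)\bm b + \bm\varepsilon(t)$, the signal $f$ has zero mean with $\sum_t f^2(t) = 1$, and the residual noise has mean zero (so it contributes nothing to its cross term with the deterministic $f$ once the noise is averaged out), the covariance over $t=1,\dots,n$ between $f(t)$ and $\bm w'\bm Z(t)$ reduces to $\bm w'\bm b$. For the denominator, $Var[f(t)]$ equals the squared Euclidean norm of $f$, which is $1$ and hence independent of $\bm w$, whereas $Var[\bm w'\bm Z(t)] = (\bm w'\bm b)^2 + \bm w'\bm\Sigma_\varepsilon\bm w$ is precisely the expression derived in~\eqref{eq3}. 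Combining these,
\begin{equation*}
Cor[f(t), \bm w'\bm Z(t)] = \frac{\bm w'\bm b}{\sqrt{(\bm w'\bm b)^2 + \bm w'\bm\Sigma_\varepsilon\bm w}}.
\end{equation*}

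The final step is to observe that the square of the right-hand side equals $(\bm w'\bm b)^2\big/\big[(\bm w'\bm b)^2 + \bm w'\bm\Sigma_\varepsilon\bm w\big] = \text{SNR}(\bm w)\big/\big(\text{SNR}(\bm w)+1\big)$ by the definition~\eqref{eq:R} of $\text{SNR}(\bm w)$. Since $x \mapsto x/(x+1)$ is strictly increasing on $[0,\infty)$, and since both $Cor[\cdot,\cdot]$ and $\text{SNR}(\bm w)$ are unchanged by rescaling $\bm w$ while the sign of $Cor[f(t),\bm w'\bm Z(t)]$ matches that of $\bm w'\bm b$, maximizing $Cor[f(t),\bm w'\bm Z(t)]$ over $\bm w \in \R^p$ is equivalent to maximizing $\text{SNR}(\bm w)$. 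By Proposition~\ref{propMonotone} the latter is attained at $\bm w_{MAF}$, which proves the lemma.

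I do not expect a substantive obstacle here; the only point needing care is the meaning of $Cor[\cdot,\cdot]$ when the first argument is the fixed signal $f$ and the second contains the random noise $\bm\varepsilon$. I would interpret it, consistently with the use of $Var$ and with the SNR definition earlier in this section, as the correlation induced by variation over the observation window $t=1,\dots,n$ with noise cross-terms resolved in expectation, and I would point out that the normalization $\sum_t f^2(t) = 1$ is exactly what makes the time-averaging constants cancel so the formulas agree with~\eqref{eq3} and~\eqref{eq:R}. Once this convention is made explicit, the monotonicity argument above goes through verbatim.
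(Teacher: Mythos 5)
Your proposal is correct and follows essentially the same route as the paper: both establish the identity $Cor[f(t),\bm w'\bm Z(t)]^2 = \text{SNR}(\bm w)/(\text{SNR}(\bm w)+1)$ and conclude by monotonicity of $x\mapsto x/(x+1)$ together with Proposition~\ref{propMonotone}. You additionally derive that identity explicitly and handle the sign of $\bm w'\bm b$, details the paper leaves implicit.
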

\begin{proof}
  First note that the squared cross-correlation is given by
  \begin{equation}
    \label{eq:8}
Cor[f(t), \bm w' \bm Z(t)]^2 = \frac{\text{SNR}(\bm w)
}{\text{SNR}(\bm w) + 1}.
  \end{equation}
The proof now follows immediately by letting $k_f = 1$ and $k_\varepsilon = 0$ in \eq{eq:monotoneEq} in the proof of Proposition \ref{propMonotone}.
\end{proof}
{\it Remark:} Note that Lemma \ref{lem:monotone} above is incidentally the defining property of Canonical Correlation Analysis (CCA) with one underlying signal. However, there is a fundamental difference: CCA requires the knowledge of the signal, $f(t)$, while MAF does not, hence the above lemma being called the ``oracle property'' of MAF.\\

We now proceed to show that the leading MAF time series is invariant to any rescaling of the input time series.
\begin{lem}
\label{lem:transform}
Consider a data matrix $\bm Z \in R^{n\times p}$ where each column of $\bm Z$ represents a single time series of length $n$ and $Y_{MAF} (\bm Z)$ represents the leading MAF factor of $\bm Z$.
Now let $\bm A$ be an invertible matrix such that $\tilde{\bm Z} = \bm Z \bm
A$. Then,
\begin{equation}
 Y_{MAF} (\bm Z) = Y_{MAF} (\tilde{\bm Z}).
\end{equation}
\end{lem}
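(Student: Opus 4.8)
The plan is to exploit the fact that an invertible transformation of the columns of $\bm Z$ merely reparameterizes the family of linear combinations without changing the family itself. First I would rewrite the transformation in per-time-step vector form: since $\tilde{\bm Z} = \bm Z \bm A$ acts on rows, we have $\tilde{\bm Z}(t) = \bm A' \bm Z(t)$, so for any $\tilde{\bm w} \in \R^p$,
\[
\tilde{\bm w}' \tilde{\bm Z}(t) = \tilde{\bm w}' \bm A' \bm Z(t) = (\bm A \tilde{\bm w})' \bm Z(t).
\]
Because $\bm A$ is invertible, the map $\tilde{\bm w} \mapsto \bm A \tilde{\bm w}$ is a bijection of $\R^p$ onto itself, so the two candidate families of combined time series coincide as sets: $\{\tilde{\bm w}' \tilde{\bm Z}(t) : \tilde{\bm w} \in \R^p\} = \{\bm w' \bm Z(t) : \bm w \in \R^p\}$. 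This is the one place where invertibility of $\bm A$ is used, and it is exactly what guarantees that no candidate combination is gained or lost.

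Next I would observe that the MAF objective, the lag-one autocorrelation $\text{Cor}(\bm w' \bm Z(t), \bm w' \bm Z(t+1))$, is a functional of the combined time series alone and does not depend on which coefficient vector produced it. Equivalently, in the generalized Rayleigh-quotient form $\bm w' \bm S_\Delta \bm w / \bm w' \bm S \bm w$ (whose minimizer gives the leading eigenvector of $\bm S^{-1/2} \bm S_\Delta \bm S^{-1/2}$), the substitution $\bm w = \bm A \tilde{\bm w}$ replaces $\bm S$ by $\bm A' \bm S \bm A$ and $\bm S_\Delta$ by $\bm A' \bm S_\Delta \bm A$ (using $\Delta \tilde{\bm Z}(t) = \bm A' \Delta \bm Z(t)$), and the factors of $\bm A$ cancel, leaving the quotient unchanged. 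Hence the supremum of the MAF objective is the same for $\bm Z$ and $\tilde{\bm Z}$ and is attained at the same combined time series: if $\bm w_{MAF}(\bm Z)$ achieves it, then $\tilde{\bm w}_{MAF} := \bm A^{-1} \bm w_{MAF}(\bm Z)$ achieves it for $\tilde{\bm Z}$, and $\tilde{\bm w}_{MAF}' \tilde{\bm Z}(t) = \bm w_{MAF}(\bm Z)' \bm Z(t)$ pointwise in $t$.

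The only point requiring care, and the step I expect to be the mild obstacle, is the normalization ambiguity: both $\bm w_{MAF}$ and therefore $Y_{MAF}$ are determined only up to a nonzero scalar, and uniqueness of the maximizing direction requires the leading generalized eigenvalue to be simple. I would therefore state the identity $Y_{MAF}(\bm Z) = Y_{MAF}(\tilde{\bm Z})$ as an identity of the MAF factors after the standard normalization (zero mean, unit variance over $t = 1, \dots, n$), up to an overall sign, and note that this normalization is itself a property of the combined series rather than of the parameterization, hence preserved under the correspondence established above. Granting simplicity of the leading eigenvalue, this completes the argument; the whole proof is short, and the main thing to get right is the bookkeeping of whether $\bm A$ multiplies on the left or the right.
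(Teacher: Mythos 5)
Your proof is correct and follows essentially the same route as the paper's: the change of variables $\bm u = \bm A\tilde{\bm w}$ in the generalized Rayleigh quotient shows $\bm w_{MAF}(\bm Z\bm A) = \bm A^{-1}\bm w_{MAF}(\bm Z)$, and the factor time series is then unchanged. Your explicit handling of the transpose bookkeeping and of the scale/sign ambiguity (simplicity of the leading eigenvalue, normalization as a property of the combined series) is a point of care the paper's own proof passes over silently, but it does not change the argument.
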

\begin{proof}
We shall show,
\begin{equation}
 Y_{MAF} (\bm Z) := \bm Z \bm w_{MAF}(\bm Z) = \tilde{\bm Z}  \bm w_{MAF}(\tilde{\bm Z}) =: Y_{MAF} (\tilde{\bm Z}).
\end{equation}
First from \eq{eq:prelim1},
\begin{align}
  \bm w_{MAF}(\bm Z) =& \argmax_{\bm w\in \R^p} \{\text{Cor}(\bm w' \bm Z(t), \bm w' \bm Z(t+1))\} \nonumber \\
 =& \argmax_{\bm w \in \R^p} \frac{\bm w' \bm S_\delta \bm w}{\bm w' \bm S \bm w}
\end{align}
where $\bm S_\delta = Cov(\bm Z(t), \bm Z(t+1))$ and $\bm S = Cov(\bm Z(t))$. Then note,
\begin{align}
  \bm w_{MAF}(\bm Z \bm A) =& \argmax_{\bm w \in \R^p} \frac{\bm w' \bm A' \bm S_\delta \bm A \bm w}{\bm w' \bm A' \bm S \bm A \bm w} \nonumber\\
   =& \argmax_{\bm u \in \R^p} \frac{\bm u' \bm S_\delta \bm u}{\bm u' \bm S \
   \bm u}, \text{with } \bm u = \bm A \bm w,
\end{align}
as $\bm A$ is invertible. The above then gives
\begin{equation}
  \bm w_{MAF}(\bm Z \bm A) = \bm A^{-1} \bm w_{MAF}(\bm Z).
\end{equation}
Thus,
\begin{equation}
 Y_{MAF} (\bm Z) =: \tilde{\bm Z}\bm w_{MAF}(\tilde{\bm Z}) = \bm Z \bm A \bm A^{-1} \bm w_{MAF}(\bm Z) = \bm Z \bm w_{MAF}(\bm Z) =: Y_{MAF} (\tilde{\bm Z}).
\end{equation}
\end{proof}

We now proceed to give an analytic representation of the MAF coefficient vector.

\begin{lem}
\label{lem:expectedLikelihood}
Consider the `S+N model' in \eq{eq:model}. It follows that the MAF coefficient vector can be expressed as
  \begin{equation}
    \label{eq:10}
    \bm w_{MAF} = \bm \Sigma_\varepsilon^{-1} \bm b,
  \end{equation}
\end{lem}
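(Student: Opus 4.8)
The plan is to use the fact, already established in Proposition \ref{propMonotone} and Lemma \ref{lem:monotone}, that $\bm w_{MAF}$ maximizes the Rayleigh-type quotient $\text{SNR}(\bm w) = (\bm w'\bm b)^2 / (\bm w'\bm\Sigma_\varepsilon\bm w)$ over $\bm w \in \R^p$. So the task reduces to the standard calculus exercise of identifying the maximizer of this quotient explicitly. First I would note that the quotient is invariant under scaling $\bm w \mapsto c\bm w$, so without loss of generality I may impose a normalization — the natural one here is $\bm w'\bm\Sigma_\varepsilon\bm w = 1$ — and then maximize $(\bm w'\bm b)^2$ subject to that constraint.

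Next I would substitute $\bm u = \bm\Sigma_\varepsilon^{1/2}\bm w$ (legitimate since $\bm\Sigma_\varepsilon$ is a positive-definite covariance matrix, hence has a symmetric positive-definite square root), turning the problem into: maximize $(\bm u' \bm\Sigma_\varepsilon^{-1/2}\bm b)^2$ subject to $\|\bm u\|^2 = 1$. By the Cauchy--Schwarz inequality this is maximized when $\bm u$ is parallel to $\bm\Sigma_\varepsilon^{-1/2}\bm b$, i.e. $\bm u \propto \bm\Sigma_\varepsilon^{-1/2}\bm b$, which upon back-substituting $\bm w = \bm\Sigma_\varepsilon^{-1/2}\bm u$ gives $\bm w_{MAF} \propto \bm\Sigma_\varepsilon^{-1}\bm b$. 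Equivalently, one can use a Lagrange-multiplier argument: differentiating $(\bm w'\bm b)^2 - \lambda\,\bm w'\bm\Sigma_\varepsilon\bm w$ yields $2(\bm w'\bm b)\bm b = 2\lambda\bm\Sigma_\varepsilon\bm w$, so $\bm w \propto \bm\Sigma_\varepsilon^{-1}\bm b$ as long as $\bm w'\bm b \neq 0$ (which holds at any maximizer, since $\text{SNR} = 0$ otherwise and $\text{SNR}$ is certainly not identically zero). I would then remark that the representation $\bm w_{MAF} = \bm\Sigma_\varepsilon^{-1}\bm b$ is understood only up to the arbitrary positive scalar, consistent with the fact that MAF coefficient vectors are defined only up to rescaling.

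I do not anticipate a genuine obstacle here; the only points requiring care are (i) justifying that $\bm\Sigma_\varepsilon$ is invertible — implicit in the model, since it is the covariance matrix of the noise and must be assumed nonsingular for the SNR and the quotient to be well defined — and (ii) being explicit that the equality in the lemma is an equality of directions (equivalence class under positive scaling), not of a canonically normalized vector. A secondary, even more direct route would be to invoke \eqref{eq:15}: $\bm w_{MAF}$ is the leading eigenvector of $\bm S^{-1/2}\bm S_\Delta\bm S^{-1/2}$, and under the S+N model $\bm S = \bm b\bm b' + \bm\Sigma_\varepsilon$ and $\bm S_\Delta$ has a correspondingly simple rank-structure; tracing through the eigenvector computation recovers the same answer, but the Rayleigh-quotient argument above is cleaner and I would present that.
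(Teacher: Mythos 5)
Your argument is essentially identical to the paper's: the paper also reduces the claim to maximizing the Rayleigh quotient $(\bm w'\bm b)^2/(\bm w'\bm\Sigma_\varepsilon\bm w)$, changes coordinates via $\bm u = \bm\Sigma_\varepsilon^{1/2}\bm w$, and applies Cauchy--Schwarz to conclude $\bm w_{MAF} \propto \bm\Sigma_\varepsilon^{-1}\bm b$ (see the proof of Lemma \ref{lem:expLikeli} in Appendix \ref{app:proofs}). Your added remarks on invertibility of $\bm\Sigma_\varepsilon$ and on the equality holding only up to scaling are correct and, if anything, slightly more careful than the paper's presentation.
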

\begin{proof}
  See proof of Lemma \ref{lem:expLikeli}.
\end{proof}

Lastly, we show that the SNR of the MAF time series under the S+N model is proportional to the expected value of a likelihood ratio statistic for a Gaussian noise specification.

\begin{lem}
\label{lem:expLikeli}
Consider the `S+N model' in \eq{eq:model} and the following set of hypotheses,
\begin{align}
\label{eq:hypTest2}
  H_0: &   \bm Z_n(t) = \bm b + \bm \varepsilon(t) \nonumber\\
H_A: & \bm Z_n(t) = f(t) \bm b + \bm \varepsilon(t), \quad f(t) \neq
 \text{constant},
\end{align}
such that, $\sum_tf(t) = 0$ and $\sum_tf^2(t) = 1$ as defined in
\eqref*{eq:model}, and $\bm \varepsilon(t) \sim N(0, \bm\Sigma)$. Then,
\begin{equation}
  SNR(\bm w_{MAF}) = E[\ln(L_A) - \ln(L_0)],
\end{equation}
where $L_A$ and $L_0$ are the likelihoods of the two hypotheses given the data matrix $\bm Z \in \R^{n \times p}$.
\end{lem}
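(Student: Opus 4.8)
The plan is to show that both sides equal the scalar $\bm b'\bm\Sigma_\varepsilon^{-1}\bm b$. For the left-hand side I would simply substitute the representation $\bm w_{MAF}=\bm\Sigma_\varepsilon^{-1}\bm b$ supplied by Lemma~\ref{lem:expectedLikelihood} into the SNR formula \eqref{eq:R}: the numerator becomes $(\bm b'\bm\Sigma_\varepsilon^{-1}\bm b)^2$ and the denominator $\bm b'\bm\Sigma_\varepsilon^{-1}\bm b$, so $SNR(\bm w_{MAF})=\bm b'\bm\Sigma_\varepsilon^{-1}\bm b$. Everything then reduces to evaluating the expected log-likelihood ratio and checking that it equals this same quantity.

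For the right-hand side I would write out the Gaussian log-likelihood of the full data matrix $\bm Z$, treating the noise rows $\bm\varepsilon(t)$ as independent $N(0,\bm\Sigma)$, so that $\ln L=-\tfrac{np}{2}\ln 2\pi-\tfrac n2\ln|\bm\Sigma|-\tfrac12\sum_{t=1}^n(\bm Z(t)-\mu(t))'\bm\Sigma^{-1}(\bm Z(t)-\mu(t))$, with $\mu(t)=\bm b$ under $H_0$ and $\mu(t)=f(t)\bm b$ under $H_A$. Forming $\ln L_A-\ln L_0$ makes the constant, the determinant term, and the purely quadratic-in-$\bm Z$ terms cancel, leaving an expression that is affine in the data. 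Rather than manipulate this directly, I would use the cleaner observation that, since the two models are Gaussian with a common covariance, $E_{H_A}[\ln L_A-\ln L_0]$ is exactly the Kullback--Leibler divergence $\tfrac12\sum_{t=1}^n(\mu_A(t)-\mu_0(t))'\bm\Sigma^{-1}(\mu_A(t)-\mu_0(t))$, i.e. a squared Mahalanobis distance between the two mean curves.

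It remains to evaluate that Mahalanobis distance using the normalization constraints $\sum_t f(t)=0$ and $\sum_t f^2(t)=1$ from \eqref{eq:model}. Pulling $\bm b$ out gives $\bm b'\bm\Sigma^{-1}\bm b\sum_t(f(t)-1)^2$ for the mean curves as written in \eqref{eq:hypTest2}, or $\bm b'\bm\Sigma^{-1}\bm b\sum_t f^2(t)=\bm b'\bm\Sigma^{-1}\bm b$ if the null mean curve is read as pure noise; in that case, once the usual one-half versus twice convention for the log-likelihood ratio statistic is reconciled, the sum collapses to the single term $\bm b'\bm\Sigma_\varepsilon^{-1}\bm b=SNR(\bm w_{MAF})$, which is the claim.

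The main obstacle I anticipate is bookkeeping rather than anything conceptual: making the two mean specifications in \eqref{eq:hypTest2} consistent with the normalization \eqref{eq:model} and tracking the constant produced by $\sum_t(f(t)-1)^2$ versus $\sum_t f^2(t)$, so that the expected log-likelihood ratio lands exactly on $\bm b'\bm\Sigma_\varepsilon^{-1}\bm b$ and not a constant multiple of it. Two smaller points also need care: the expectation must be taken under $H_A$ (the signal-present model), since that is where $E[\bm Z(t)]=f(t)\bm b$ produces the clean cancellation against $\sum_t f^2(t)=1$; and the independence-across-time of $\bm\varepsilon(t)$ is what lets the likelihood factorize, so if the proportional-autocovariance structure of Proposition~\ref{propMonotone} is retained instead, one must first verify that the cross-time terms cancel in the log-ratio before taking the expectation.
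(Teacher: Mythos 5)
Your proposal follows essentially the same route as the paper: both sides are reduced to the scalar $\bm b'\bm\Sigma_\varepsilon^{-1}\bm b$, with the SNR side handled exactly as in the paper (Cauchy--Schwartz giving $\bm w_{MAF}=\bm\Sigma_\varepsilon^{-1}\bm b$ and hence $SNR=\bm b'\bm\Sigma_\varepsilon^{-1}\bm b$) and the likelihood side computed from the Gaussian log-likelihoods with common covariance; your packaging of the expected log-ratio as a Kullback--Leibler divergence is just a cleaner way of organizing the same term-by-term cancellation the paper performs in \eqref{likelihoodRatio}--\eqref{likelihoodRatio2}. The two bookkeeping worries you flag are real and are in fact present in the paper's own proof: the paper silently replaces the stated null mean $\bm b$ by zero (``without loss of generality,'' which it is not, since with $\mu_0(t)=\bm b$ the expected log-ratio is $\tfrac{n+1}{2}\bm b'\bm\Sigma_\varepsilon^{-1}\bm b$ rather than $\tfrac12\bm b'\bm\Sigma_\varepsilon^{-1}\bm b$), and even with the zero-mean null it obtains $E[\ln L_A-\ln L_0]=\tfrac12\bm b'\bm\Sigma_\varepsilon^{-1}\bm b$, i.e.\ half the SNR, which is why its final sentence retreats from the stated equality to ``proportional to.'' Your remark that the time-wise factorization of the likelihood presupposes independent noise across $t$ (inconsistent with the proportional autocovariance model of Proposition \ref{propMonotone} unless $k_\varepsilon=0$) is likewise a legitimate caveat the paper does not address. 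So your argument is correct as far as the paper's own argument is, and you have correctly identified that the lemma as stated holds only up to a constant factor.
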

\begin{proof}
  See Appendix \ref{app:proofs}.
\end{proof}

\citet{switzer84} show that MAF and PCA are equivalent in the special and restrictive case where the noise covariance matrix is given by
\begin{equation}
  \label{eq:44}
  Cov(\bm \varepsilon(t)) = \Sigma_\varepsilon = \sigma^2 \bm I,
\end{equation}
i.e., the noise component of each input has the same variance and these
$p\times p$ noise components have no cross-correlation. However, this
equivalence between MAF and PCA
does not hold when there is noise cross-correlation or heterogeneous noise
variance.

\subsection{Multiple signals model}
We can generalize the S+N model to allow for multiple underlying signal
time series. Each of the $p$ observed concurrent time series is made up of its own unknown smooth signal time series and its own superposed noise time series representing short term fluctuations. The specific structure of the problem represents each of these $p$ signal time series in terms of $q<p$ underlying orthogonal factor time series, representing the reduced dimensionality of the signal structure.  The goal is to find $q$ new time series which are linear combinations of the observed time series. These $q$ new time series aim to recover the underlying orthogonal factor time series, i.e. the signals.  We show conditions under which the MAF linear combinations of the observed time series achieve this concentration of the underlying signal information.

Our strategy for showing that the $q$ MAF time series jointly capture the available signal information contained in the observed $p$-variate time series is to demonstrate that the $q$-space spanned by MAF is the same as the $q$-space obtained from a canonical correlation analysis (CCA) of the $p$ observed times series one the one hand and the $q$ unobserved signal time series on the other hand. Theorem \ref{multipleSignalsTheorem} below shows this equivalence, under specific conditions for the additive noise component of the observed time series. The equivalence, using the modeled noise covariance structure, implies that MAF, which is computed \emph{without} specifying the underlying signal, can capture the same signal information as a canonical correlation analysis which \emph{requires} the signal specification. In this sense MAF can be said to have an oracle property under the specified conditions insofar as covariances and lagged covariances computed from the observed data approximate their modeled structure. Thus, MAF is able to achieve the same result as CCA by taking advantage of time order.

Consider a $p$-variate set of time series, $\bm Z(t), \: t = 1,\dots n > p$, comprised of $q \leq p$ normalized underlying smooth orthogonal signals, $\bm F(t) = (f_1(t), f_2(t), ..., f_q(t))$, and zero-mean $p$-variate noise, $\bm\varepsilon(t) \sim \bm \Sigma_\varepsilon$. For the signal, we assume
\begin{equation}
 \frac{1}{n-1}\sum_{t=1}^{n-1}f_i(t)f_j(t+1) = k_i \delta_{ij}
\end{equation}
where $1 > k_1 \geq k_2 \geq \dots \geq k_q > k_\varepsilon$ {and $\delta_{ij}$ us the familiar Kronecker delta function}\footnote{We neglect any non-orthogonality that might arise between lagged versions of the {signal} time series.}. For the noise, assume a proportional covariance model $Cov[\bm \varepsilon(t), \bm \varepsilon(t+1)] = k_\varepsilon \bm \Sigma_\varepsilon$.  A $p$-length signal strength vector, $\bm b_i$, describes the amount of signal $f_i(t)$ present in each of the $p$ original time series $\bm Z_j(t)$. With $\bm B \in \R^{(p\times q)}$ with columns $(\bm b_1, \bm b_2, \dots, \bm b_q) $, the full model is
\begin{align}
  \label{eq:93}
 \bm Z(t) =   \bm B \bm F(t) + \bm \varepsilon(t).
  \end{align}
  Letting $diag(\bm k)$ be the matrix formed by the $q$-vector $\bm k = (k_1, k_2, \dots, k_q)$ in the diagonal and zeros in the off diagonal, we can write
\begin{align}
  \bm \Sigma_Z =& Cov[\bm Z(t), \bm Z(t)] = \bm B \bm B' + \bm \Sigma_\varepsilon  \nonumber\\
  \bm \Sigma_{\delta \bm Z} =& Cov[\bm Z(t), \bm Z(t+1)] \nonumber\\
= & \bm B Cov[\bm F(t), \bm F(t+1)] \bm B' + Cov[\bm \varepsilon(t), \bm \varepsilon(t+1)]\nonumber\\
  = & \bm B diag(\bm k)\bm B'
    + k_\varepsilon \bm\Sigma_\varepsilon,  \nonumber\\
 \end{align}
both assumed to be positive definite.

{Canonical Correlation Analysis (CCA) looks for} linear combinations of the columns of $\bm Z$ which maximize correlation between linear
combinations of the signals contained in $F(t)$, while being orthogonal to each other. {We shall refer to these combinations as Canonical Correlation Factors (CCFs).}

In Appendix \ref{app:proofs}, we show that the first $q$ MAF and CCA factor coefficients for $\bm Z$ are both contained in the range of $\bm \Sigma_\varepsilon^{-1} \bm B$, formalized in the following theorem.

\begin{thm}
\label{multipleSignalsTheorem}
If $\min_i(k_i) > k_\varepsilon$, the first $q$ CCA and MAF coefficient vectors span the same hyperplane of dimension $q$ in $\R^p$.
\end{thm}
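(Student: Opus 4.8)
```latex
\textbf{Proof proposal.} The plan is to reduce both the MAF and the CCA eigenproblems to generalized eigenproblems whose top-$q$ eigenspaces can be read off explicitly, and to show that in both cases that eigenspace equals the range of $\bm\Sigma_\varepsilon^{-1}\bm B$. For MAF, by Lemma~\ref{lem:transform} I may work in the whitened coordinates $\bm u=\bm\Sigma_\varepsilon^{1/2}\bm w$; the MAF coefficient is the leading eigenvector of $\bm\Sigma_\varepsilon^{-1/2}\bm\Sigma_{\delta\bm Z}\bm\Sigma_\varepsilon^{-1/2}$ (cf.\ Equation~\eqref{eq:15} and the ratio form $\bm w'\bm S_\delta\bm w/\bm w'\bm S\bm w$ in the proof of Lemma~\ref{lem:transform}). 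Substituting $\bm\Sigma_{\delta\bm Z}=\bm B\,diag(\bm k)\bm B'+k_\varepsilon\bm\Sigma_\varepsilon$ gives
\begin{equation}
\bm\Sigma_\varepsilon^{-1/2}\bm\Sigma_{\delta\bm Z}\bm\Sigma_\varepsilon^{-1/2}
= \bm\Sigma_\varepsilon^{-1/2}\bm B\,diag(\bm k)\,\bm B'\bm\Sigma_\varepsilon^{-1/2} + k_\varepsilon\bm I.
\end{equation}
The additive $k_\varepsilon\bm I$ only shifts every eigenvalue by $k_\varepsilon$ and leaves eigenvectors untouched, so the top $q$ MAF eigenvectors (in $\bm u$-coordinates) are exactly the eigenvectors with positive eigenvalue of $\bm M:=\bm\Sigma_\varepsilon^{-1/2}\bm B\,diag(\bm k)\,\bm B'\bm\Sigma_\varepsilon^{-1/2}$.

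Next I would argue that $\bm M$ has rank exactly $q$ (since $\bm B$ has full column rank $q$ and $diag(\bm k)$ is positive definite because $k_i>k_\varepsilon\ge 0$... more carefully $k_i>0$ as $1>k_i>k_\varepsilon$ forces nothing negative but $diag(\bm k)$ is PD iff $k_q>0$; I will note that $k_q>k_\varepsilon$ together with the model's implicit $k_\varepsilon\ge 0$, or simply the positive-definiteness assumption on $\bm\Sigma_{\delta\bm Z}$ combined with the shift, secures this), hence its range is $q$-dimensional. A symmetric PSD matrix of the form $\bm C\bm D\bm C'$ with $\bm D\succ 0$ has column space equal to the column space of $\bm C$; applying this with $\bm C=\bm\Sigma_\varepsilon^{-1/2}\bm B$ and $\bm D=diag(\bm k)$ shows the span of the top $q$ MAF eigenvectors in $\bm u$-coordinates is $\mathrm{range}(\bm\Sigma_\varepsilon^{-1/2}\bm B)$. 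Transforming back to $\bm w$-coordinates via $\bm w=\bm\Sigma_\varepsilon^{-1/2}\bm u$ multiplies this subspace by $\bm\Sigma_\varepsilon^{-1/2}$, giving $\mathrm{range}(\bm\Sigma_\varepsilon^{-1}\bm B)$ as the MAF $q$-hyperplane. (The condition $\min_i k_i>k_\varepsilon$ is what guarantees these are genuinely the \emph{top} $q$, i.e.\ that the signal eigenvalues $k_i$ strictly dominate the noise floor $k_\varepsilon$, so there is no mixing between the signal directions and the orthogonal complement.)

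For CCA, I would use the standard characterization: the canonical directions for the pair $(\bm Z,\bm F)$ on the $\bm Z$-side are the leading eigenvectors of $\bm\Sigma_Z^{-1}\,Cov(\bm Z,\bm F)\,\bm\Sigma_F^{-1}\,Cov(\bm F,\bm Z)$. Here $Cov(\bm Z(t),\bm F(t))=\bm B\,\bm\Sigma_F$ with $\bm\Sigma_F=\bm I_q$ (the $f_i$ are orthonormal), so this matrix is $\bm\Sigma_Z^{-1}\bm B\bm B'$, which has rank $q$ and column space $\mathrm{range}(\bm\Sigma_Z^{-1}\bm B)$. The final step — and the one I expect to be the main obstacle — is to show $\mathrm{range}(\bm\Sigma_Z^{-1}\bm B)=\mathrm{range}(\bm\Sigma_\varepsilon^{-1}\bm B)$. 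This is where the Sherman–Morrison–Woodbury identity enters: writing $\bm\Sigma_Z=\bm\Sigma_\varepsilon+\bm B\bm B'$,
\begin{equation}
\bm\Sigma_Z^{-1}\bm B = \bm\Sigma_\varepsilon^{-1}\bm B\bigl(\bm I_q + \bm B'\bm\Sigma_\varepsilon^{-1}\bm B\bigr)^{-1},
\end{equation}
and since $\bm I_q+\bm B'\bm\Sigma_\varepsilon^{-1}\bm B$ is an invertible $q\times q$ matrix, right-multiplication by it is a bijection of $\mathrm{range}(\bm\Sigma_\varepsilon^{-1}\bm B)$ onto itself, so the two ranges coincide. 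Combining the two halves, both the MAF and CCA $q$-dimensional coefficient subspaces equal $\mathrm{range}(\bm\Sigma_\varepsilon^{-1}\bm B)$, which proves the theorem. I would close by remarking that the only real subtlety is bookkeeping around whether ``first $q$'' eigenvectors are well-defined (possible repeated eigenvalues among the $k_i$ are harmless since we only claim the span), and verifying the strict inequality $\min_i k_i>k_\varepsilon$ is exactly what separates the relevant eigenvalues from the rest after the $k_\varepsilon\bm I$ shift.
```
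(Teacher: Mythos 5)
Your CCA half is correct and is in fact a cleaner route than the paper's: identifying the canonical directions with the nonzero-eigenvalue eigenvectors of $\bm\Sigma_Z^{-1}\bm B\bm B'$ and then using the Woodbury-type identity $\bm\Sigma_Z^{-1}\bm B=\bm\Sigma_\varepsilon^{-1}\bm B\left(\bm I_q+\bm B'\bm\Sigma_\varepsilon^{-1}\bm B\right)^{-1}$ to get $\mathrm{range}(\bm\Sigma_Z^{-1}\bm B)=\mathrm{range}(\bm\Sigma_\varepsilon^{-1}\bm B)$ avoids the paper's detour through rewriting the CCA criterion as the Rayleigh quotient $\bm a'\bm B\bm B'\bm a/\bm a'\bm\Sigma_\varepsilon\bm a$ and whitening.

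The MAF half, however, has a genuine gap. You assert that after the substitution $\bm u=\bm\Sigma_\varepsilon^{1/2}\bm w$ the MAF coefficients are the leading eigenvectors of $\bm\Sigma_\varepsilon^{-1/2}\bm\Sigma_{\delta\bm Z}\bm\Sigma_\varepsilon^{-1/2}$. That is not the MAF problem: by \eqref{eq:15} and the ratio form in the proof of Lemma \ref{lem:transform}, MAF maximizes $\bm w'\bm\Sigma_{\delta\bm Z}\bm w/\bm w'\bm\Sigma_Z\bm w$ with $\bm\Sigma_Z=\bm B\bm B'+\bm\Sigma_\varepsilon$ in the denominator, not $\bm\Sigma_\varepsilon$. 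After your change of variables the quotient becomes $\bm u'\bm\Sigma_\varepsilon^{-1/2}\bm\Sigma_{\delta\bm Z}\bm\Sigma_\varepsilon^{-1/2}\bm u\big/\bm u'(\tilde{\bm B}\tilde{\bm B}'+\bm I)\bm u$ with $\tilde{\bm B}=\bm\Sigma_\varepsilon^{-1/2}\bm B$, so the denominator is not $\bm u'\bm u$ and the problem does not collapse to an ordinary eigenproblem for the numerator; Lemma \ref{lem:transform} licenses the coordinate change but transforms \emph{both} matrices in the quotient. What you have diagonalized is a different criterion (lagged covariance normalized by noise variance rather than by total variance). Your final answer is nevertheless correct, because $\mathrm{range}(\tilde{\bm B})$ is a common invariant subspace of the numerator $\tilde{\bm B}\,diag(\bm k)\,\tilde{\bm B}'+k_\varepsilon\bm I$ and the denominator $\tilde{\bm B}\tilde{\bm B}'+\bm I$, so the generalized eigenproblem block-diagonalizes into a $q$-dimensional block supported on $\mathrm{range}(\tilde{\bm B})$ and a complementary block on which every generalized eigenvalue equals $k_\varepsilon$, with the hypothesis $\min_i k_i>k_\varepsilon$ forcing the $q$ eigenvalues of the first block above $k_\varepsilon$. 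That comparison between the two blocks is precisely the content of the paper's argument (its inequality between $\bm x'(\bm B\bm B'+\bm I)\bm x$ and the $diag(\bm k)$-weighted form on $\mathrm{range}(\bm B)$), and it is the step your writeup is missing; without it the claim that the top $q$ MAF vectors lie in $\mathrm{range}(\bm\Sigma_\varepsilon^{-1}\bm B)$ is unsupported.
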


Consequently, the first $q$ MAFs are optimal as regressors in the following sense. Since the first $q$ CCFs maximize correlations of $q$ different linear combinations of $f_1(t), \dots, f_q(t)$, we can construct a maximizer of any linear combination of $f_1(t), \dots, f_q(t)$. Moreover, for a response variable $y(t) = \sum_q \alpha_i f_i(t)$, there exists a linear combination of the CCFs, $\hat{y}(t)$, for which the corresponding correlation, Cor$(y(t), \hat{y}(t))$, is maximized. Thus, $\hat{y}(t)$ is an optimal predictor of $y(t)$ using $q$ time series in a least squares sense. And because MAF spans the same $q$-subspace as the first $q$ CCFs, by trasitivity, MAF is also optimal in this sense. The benefit of MAF is that no knowledge of the underlying signal is needed for its computation as opposed to CCA.

For PCA in the multiple signal case we find the eigenvectors of
\begin{equation}
\label{PCAsimple}
\bm \Sigma_Z = \bm B \bm B' + \Sigma_\varepsilon.
\end{equation}

If the data time series has been normalized by their respective variance, $\bm \sigma_Z^2 = \text{diag}(\bm \Sigma_Z)$ the diagonal of the covariance matrix, the corresponding normalized PCA would we the eigenvectors of
\begin{equation}
\text{diag}(\bm\sigma_Z)^{-0.5}\left[\sum_i \bm b_i \bm b_i' + \Sigma_\varepsilon\right]\text{diag}(\bm\sigma_Z)^{-0.5},
\end{equation}
where $\text{diag}(\bm\sigma_Z)^{-0.5}$ has the variance in the diagonal and zeros in the off-diagonal. In both cases there is no closed form for the PCA eigenvectors. Moreover, the space spanned by the first $q$ PC coefficient vectors are not the same as the space spanned by CCA, and thus PCA is sub-optimal in this setting. This can easily be seen by noting that $\bm \Sigma_\varepsilon^{-1} \bm b_i$ is not an eigenvector of either matrix. An important aspect of this sub-optimality comes from PCAs lack of invariance under linear transformations. Looking at \eq{PCAsimple}, the only situation in which MAF and PCA are equivalent is if $\Sigma_\varepsilon = \bm I$.

In a situation with no noise, MAF
will recover a multivariate mixture of orthogonal signals into their
separate components without loss of information. For example, if two time series are supplied,
both with a combination of a linear and a quadratic signal and both mutually orthogonal, then the MAFs will decompose these two into their separate forms.

\begin{property}
\label{pr:nonoise}
Let $\bm z(t) =  \bm B \bm F(t)$ and $\bm F(t)$ represents q
concurrent unknown and uncorrelated time series at time $t$ such that
$q \leq p$ sorted in decreasing autocorrelations, $k_1 \geq ... \geq
k_q$, $\bm k$ in vector form. And let $\bm B$ be an unknown $p \times q$
matrix. Then the MAF will recover $\bm F(t)$. If $q=p$, $\bm B$ will also be recovered.
\end{property}

\begin{proof}
See Appendix \ref{app:proofs}.
\end{proof}

\section{Illustrations of MAF/PCA comparisons  in S+N model}
\label{sec:illustrations}

We now consider two models where it is possible to derive closed form expressions for the SNRs of the leading MAF and PC. These expressions allow us to quantify the improvement that MAF yields over PCA, and get a firm understanding of how each model parameter affects the different SNRs. A closed form expression is also derived for the leading MAF coefficient, $\bm w_{MAF}$.

\subsection{Model I: Two groups of time series}
Consider a scenario with two groups of concurrent time series following the signal-plus-noise model, with time-independent noise. Both groups contain $q$ time series each. The following lemma gives the relationship between the coefficients of each group of time series both for the MAF and the PCA case.

\begin{lem} Consider two groups of $q$ time series, with SNR equal to $b_1$ and $b_2$, with $b_2<b_1$ respectively, and where noise has equal variance and a common
 cross-correlation of $\rho > \frac{-1}{p-1}$ between each $2q$ time series. Let the total number of time series be represented by the $2q$-vectors $\bm Z(t)$, for $t = 1, ..., n$. Then consider a linear combination of these $2q$ time series $\bm w' \bm Z(t)$. The associated SNR of this linear combination is given by
\begin{align}
  \label{eq:5}
  \text{SNR}(\bm w) = \text{SNR}(w_1, w_2) =&
  \frac{b_1^2q[1+\nu\gamma]^2}{(1-\rho)(1+\nu^2)+\rho q(1+\nu)^2}
  \nonumber\\
  \nu =& w_2/w_1, \quad \gamma = b_2/b_1,
\end{align}
where $w_1$ and $w_2$ represent the coefficient for each group of time series. The maximum SNR, and also the MAF SNR, occurs when
\begin{equation}
  \label{eq:33}
  \nu_{MAF} := \frac{w_2}{w_1}= \frac{\gamma(1-\rho+\rho q) -
    \rho q}{1-\rho+\rho q-\gamma\rho q},
\end{equation}
which we call the MAF coefficient ratio.

Similarly, the PCA coefficients are determined by maximizing total variance,

\begin{equation}
  \label{eq:7}
   S+N(w_1, w_2) = S(w_1, w_2) + N(w_1, w_2) = \max_{\nu}\left\{ (qw_1b_1+qw_2b_2)^2
     +(1-\rho) + \rho(q w_1+q w_2)^2 \right\},
\end{equation}
which is maximized when
\begin{equation}
  \label{eq:45}
   \nu_{PCA} = \frac{w_2}{w_1} =  \sqrt{\alpha^2 + 1} - \alpha,  \text{   with  }
  \alpha = \frac{b_1^2 - b_2^2}{2(b_1 b_2 + \rho)}.
\end{equation}
\end{lem}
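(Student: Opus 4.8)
The plan is to exploit the block symmetry of the model to reduce the problem to two scalar unknowns and then carry out short algebra. Under the stated hypotheses I would write the (unit-variance) noise covariance as the $2q\times 2q$ equicorrelation matrix $\bm\Sigma_\varepsilon=(1-\rho)\bm I+\rho\,\bm 1\bm 1'$, which is positive definite exactly because $\rho>-1/(p-1)=-1/(2q-1)$, and the signal-strength vector as $\bm b=(b_1,\dots,b_1,b_2,\dots,b_2)'$ with $b_1$ in the first $q$ coordinates and $b_2$ in the last $q$. Both $\mathrm{SNR}(\bm w)=(\bm w'\bm b)^2/(\bm w'\bm\Sigma_\varepsilon\bm w)$ from \eq{eq:R} and the total-variance matrix $\bm\Sigma_Z=\bm b\bm b'+\bm\Sigma_\varepsilon$ are invariant under permuting coordinates within either block, so the MAF and PCA optimizers may be taken block-constant, $\bm w=(w_1\bm 1_q',\,w_2\bm 1_q')'$. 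For MAF this is immediate from Lemma \ref{lem:expectedLikelihood} (with $k_\varepsilon=0$ here, Proposition \ref{propMonotone} applies), since $\bm w_{MAF}=\bm\Sigma_\varepsilon^{-1}\bm b$ inherits the block pattern of $\bm b$; for PCA it follows because $\bm\Sigma_Z$ equals $(1-\rho)\bm I$ plus a perturbation of rank at most $2$ supported on $\mathrm{span}\{\bm b,\bm 1\}$, which is exactly the two-dimensional space of block-constant vectors, so the leading eigenvector lies there.

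Next I would substitute. With $\nu=w_2/w_1$ and $\gamma=b_2/b_1$ one gets $\bm w'\bm b=qw_1b_1(1+\nu\gamma)$ and $\bm w'\bm\Sigma_\varepsilon\bm w=(1-\rho)\|\bm w\|^2+\rho(\bm 1'\bm w)^2=qw_1^2\bigl[(1-\rho)(1+\nu^2)+\rho q(1+\nu)^2\bigr]$, using $\|\bm w\|^2=qw_1^2(1+\nu^2)$ and $\bm 1'\bm w=qw_1(1+\nu)$; the factor $qw_1^2$ then cancels in the ratio, producing the stated expression for $\mathrm{SNR}(w_1,w_2)$ as a function of $\nu$ alone. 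For the MAF coefficient ratio I would apply Sherman--Morrison, $\bm\Sigma_\varepsilon^{-1}=\tfrac{1}{1-\rho}\bigl(\bm I-\tfrac{\rho}{1+(2q-1)\rho}\bm 1\bm 1'\bigr)$, to $\bm w_{MAF}=\bm\Sigma_\varepsilon^{-1}\bm b$: since $\bm 1'\bm b=q(b_1+b_2)$, the block-$1$ and block-$2$ entries of $\bm w_{MAF}$ are proportional to $b_1-\tfrac{\rho q(b_1+b_2)}{1+(2q-1)\rho}$ and $b_2-\tfrac{\rho q(b_1+b_2)}{1+(2q-1)\rho}$, and dividing these (clear the common denominator, then divide numerator and denominator by $b_1$) collapses to $\nu_{MAF}=\frac{\gamma(1-\rho+\rho q)-\rho q}{1-\rho+\rho q-\gamma\rho q}$. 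Equivalently one may differentiate $\mathrm{SNR}(\nu)$ in $\nu$: the stationarity condition, after cancelling the common nonzero factor $1+\nu\gamma$ (which vanishes only at the global minimum $\mathrm{SNR}=0$), is linear in $\nu$ and gives the same value.

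For PCA I would restrict the eigenvalue problem for $\bm\Sigma_Z$ to the block-constant subspace. With $\bm w=(w_1\bm 1_q',\,w_2\bm 1_q')'$, the two scalar eigen-equations are $q(b_1+\nu b_2)b_1+(1-\rho)+\rho q(1+\nu)=\lambda$ (from block $1$, after dividing by $w_1$) and its analogue with $b_1$ and $b_2$, and $w_1$ and $w_2$, interchanged (from block $2$); equating the two expressions for $\lambda$ and clearing $qw_1$ leaves $(b_1+\nu b_2)(\nu b_1-b_2)=\rho(1-\nu^2)$, i.e. $(b_1b_2+\rho)\nu^2+(b_1^2-b_2^2)\nu-(b_1b_2+\rho)=0$, whose roots are $\nu=-\alpha\pm\sqrt{\alpha^2+1}$ with $\alpha=\tfrac{b_1^2-b_2^2}{2(b_1b_2+\rho)}$. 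Feeding either root back into the block-$1$ equation gives $\lambda=(1-\rho+\rho q+qb_1^2)+q(b_1b_2+\rho)\nu$, affine and hence monotone in $\nu$, so (assuming $b_1b_2+\rho>0$, which is needed even for $\alpha$ to be defined) the leading principal component corresponds to the larger root, $\nu_{PCA}=\sqrt{\alpha^2+1}-\alpha$; the displayed variational form for $S+N(w_1,w_2)$ is just this same computation under the normalization $q(w_1^2+w_2^2)=1$.

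The hard part will be the two genuinely non-mechanical points. First, the symmetry reduction to block-constant $\bm w$: painless for MAF via $\bm w_{MAF}=\bm\Sigma_\varepsilon^{-1}\bm b$, but for PCA one must observe that $\bm\Sigma_Z$ acts as the scalar $1-\rho$ on the orthogonal complement of $\mathrm{span}\{\bm b,\bm 1\}$, so that the top eigenvector necessarily lies in the two-dimensional block-constant space. Second, selecting the correct branch of the quadratic for $\nu_{PCA}$, which hinges on the eigenvalue being increasing in $\nu$ and hence on the implicit sign condition $b_1b_2+\rho>0$. Everything else---expanding the quadratic forms, Sherman--Morrison, and simplifying the rational expressions---is routine bookkeeping.
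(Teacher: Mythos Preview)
Your argument is correct and in fact more careful than the paper's own proof, which is essentially a two-sentence sketch: substitute the model parameters into \eqref{eq:R} to get \eqref{eq:5}, then optimize in $\nu$; for PCA, maximize \eqref{eq:7} subject to $w_1^2+w_2^2=1$. The paper never explicitly justifies the reduction to block-constant $\bm w$, whereas you do (via $\bm w_{MAF}=\bm\Sigma_\varepsilon^{-1}\bm b$ for MAF and the rank-$2$ perturbation structure of $\bm\Sigma_Z$ for PCA); this is a genuine addition. For the MAF ratio you take the route through Lemma~\ref{lem:expectedLikelihood} and Sherman--Morrison rather than the paper's direct calculus on \eqref{eq:5}, though you also note the calculus alternative---both land on the same linear equation in $\nu$. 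For PCA you cast the problem as a $2\times 2$ eigenvalue problem instead of a constrained maximization; these are equivalent formulations, and your branch-selection discussion (and the implicit hypothesis $b_1b_2+\rho>0$) makes explicit a point the paper leaves silent.
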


\begin{proof}
For the MAF result, substitute the specific parameter values of the above model into
the general expression for SNR$(\bm w)$ in \eqref*{eq:R} to obtain \eqref*{eq:5}. Thereafter,
find the maximum of the quadratic expression in
\eqref*{eq:5}. Note that the minimum is attained when $\nu = -1/\gamma$. For the PCA result, find the values of $w_1$ and $w_2$ for which \eqref*{eq:7} is maximized under the constraint that $w_1^2 + w_2^2= 1$.
\end{proof}

Note that the input parameters investigated here are the cross-correlation in the noise, $\rho$, the relative differences in the two groups' SNR, $\gamma$, and the overall number of time series, $p = 2q$.

In particular, consider what happens to MAF and PCA SNR when changing the number of time
series in each group, $q$. Note that $\nu_{PCA}$ does not depend on $q$, while $\nu_{MAF}$
does. Taking limits in $q$,
\begin{equation}
  \label{eq:49}
  \text{SNR}(\bm w_{MAF}) \sim q\frac{b_1^2 (1 - \gamma)^2}{2(1-\rho)} \text{ as } q\rightarrow \infty.
\end{equation}

Similarly,
\begin{equation}
  \label{eq:1}
  \text{SNR}(\bm w_{PCA}) \sim \frac{b_1^2 (1
    + \nu_{PCA} \gamma)}{\rho (1 + \nu_{PCA})^2} \text{ as } q\rightarrow \infty.
\end{equation}
Thus, the associated SNR of PCA approaches a
constant, while SNR of MAF will grow linearly with $q$. This implies that the MAF SNR continues to improve as the number of time series increases, unlike PCA which reaches a plateau. Furthermore,  $
\lim_{q \to \infty}\nu_{MAF} = - 1$, a result that intuitively follows from the fact that the noise has equal variance across the groups, unlike the signal. Thus, if $w_1 = -w_2$ and $q$ is large enough the noise component will cancel while the signal remains.

In \fig{MAFPCAratio}, MAF and PCA SNR values are compared as
$\gamma$ and $\rho$ are changed. The ratios of the SNRs are plotted in
a contour plot. Each panel shows a different $p$, the
total number of time series. We see that increasing the
cross-correlation, $\rho$, increases the difference between MAF and
PCA SNR while increasing $\gamma$ has the opposite effect. Increasing
the number of time series will exacerbate the difference between the SNRs, as explained in the asymptotic analysis above.

\EPSFIG[scale=0.6]{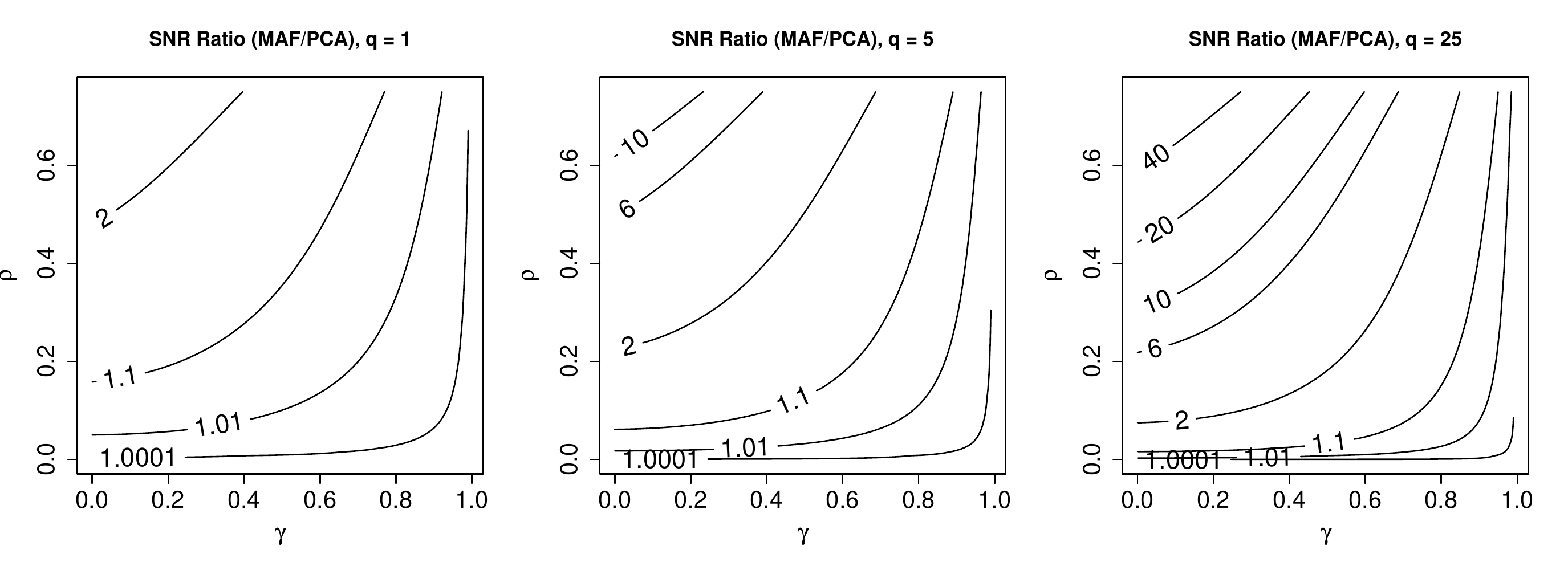}{The SNR of MAF divided by the SNR
  of PCA when $q = {1, 5, 25}$, with  $\gamma$ and $\rho$
  changing. }{MAFPCAratio}

\subsection{Model II: A model with common cross-correlation and different variances}

To generalize Model I, we allow each time series to have a unique signal strength and noise variance. The following lemma derives the form the MAF coefficient vector, $\bm w_{MAF}$, takes in this model.

\begin{lem}
Consider the multivariate time series model,
\begin{equation}
  \label{eq:22mod}
  \bm Z(t) = f(t) \bm b + \bm \varepsilon(t),
\end{equation}
where $f(t)$ is the signal time series, $b$ is the vector of signal
strengths for each time series, and $Cor[
\varepsilon_i(t), \varepsilon_j(t)] = \rho$ and $Var[\varepsilon_i(t)]
= \sigma_i^2$.

Then, the MAF coefficient vector $\bm w = (w_1, \dots, w_p)$, is given by
\begin{equation}
  \label{eq:6}
  w_i \propto \; \frac{b_i}{\sigma_i^2} -
    \frac{\rho}{1+\rho(p-1)}\sum_{j=1}^{p}\frac{b_j}{\sigma_i\sigma_j},
  \quad i = 1, 2, \dots, p.
\end{equation}
\end{lem}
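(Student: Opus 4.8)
The plan is to reduce the statement entirely to Lemma~\ref{lem:expectedLikelihood}, which already identifies the MAF coefficient vector — up to an irrelevant positive scalar, since $\bm w_{MAF}$ is only defined up to rescaling — as $\bm w_{MAF} = \bm\Sigma_\varepsilon^{-1}\bm b$. Under the equicorrelation-with-heterogeneous-variances noise structure of Model~II, $\bm\Sigma_\varepsilon$ is just a diagonal rescaling of a compound-symmetry matrix, so the whole task becomes writing down $\bm\Sigma_\varepsilon^{-1}$ in closed form and reading off its components against $\bm b$.

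Concretely, first I would write $\bm\Sigma_\varepsilon = \bm D\bm R\bm D$ with $\bm D = \text{diag}(\sigma_1,\dots,\sigma_p)$ and $\bm R = (1-\rho)\bm I + \rho\,\bm 1\bm 1'$ the $p\times p$ equicorrelation matrix ($\bm 1$ the all-ones vector), so that $\bm\Sigma_\varepsilon^{-1} = \bm D^{-1}\bm R^{-1}\bm D^{-1}$. Next I would invert $\bm R$ by the Sherman--Morrison formula, obtaining
\[
\bm R^{-1} = \frac{1}{1-\rho}\left(\bm I - \frac{\rho}{1+\rho(p-1)}\,\bm 1\bm 1'\right),
\]
which is legitimate precisely when $\rho\neq 1$ and $1+\rho(p-1)\neq 0$, i.e. $\rho > -\tfrac{1}{p-1}$, the same constraint already imposed in Model~I (together with $\sigma_i>0$ for all $i$). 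Finally I would expand $\bm\Sigma_\varepsilon^{-1}\bm b = \bm D^{-1}\bm R^{-1}\bm D^{-1}\bm b$ componentwise: $\bm D^{-1}\bm b$ has $j$-th entry $b_j/\sigma_j$; applying $\bm R^{-1}$ produces an $i$-th entry equal to $\tfrac{1}{1-\rho}\bigl(b_i/\sigma_i - \tfrac{\rho}{1+\rho(p-1)}\sum_{j=1}^p b_j/\sigma_j\bigr)$; and the last multiplication by $\bm D^{-1}$ divides the $i$-th entry by $\sigma_i$. Dropping the common factor $1/(1-\rho)$ then leaves exactly $w_i \propto \tfrac{b_i}{\sigma_i^2} - \tfrac{\rho}{1+\rho(p-1)}\sum_{j=1}^p \tfrac{b_j}{\sigma_i\sigma_j}$, which is \eqref*{eq:6}.

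There is essentially no obstacle here: the only non-routine ingredient is the Sherman--Morrison inversion of the equicorrelation matrix, and the only things to watch are (i) tracking where the scalar $1/(1-\rho)$ goes and confirming it can be absorbed into the proportionality, and (ii) recording the invertibility hypothesis $\rho\in(-\tfrac{1}{p-1},1)$ explicitly so that $\bm\Sigma_\varepsilon$ — hence $\bm w_{MAF}$ — is well defined. If a self-contained argument not invoking Lemma~\ref{lem:expectedLikelihood} were preferred, the same endpoint is reached by maximizing the Rayleigh quotient $\bm w'\bm S_\delta\bm w/\bm w'\bm S\bm w$: using the proportional-noise identity \eqref*{eq:17}, the stationarity condition becomes the generalized eigenproblem $(\bm b\bm b')\bm w = \lambda\,\bm\Sigma_\varepsilon\bm w$, whose leading solution is $\bm w\propto\bm\Sigma_\varepsilon^{-1}\bm b$ because $\bm b\bm b'$ has rank one, and one then performs the same matrix inversion as above.
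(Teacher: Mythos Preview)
Your proposal is correct and follows essentially the same route as the paper: write $\bm\Sigma_\varepsilon$ as a diagonal rescaling of the compound-symmetry matrix, invert via Sherman--Morrison, and then apply $\bm w_{MAF}=\bm\Sigma_\varepsilon^{-1}\bm b$ from Lemma~\ref{lem:expectedLikelihood} (the paper's $\bm A$ is your $\bm D$). Your additional care in recording the invertibility constraint $\rho\in(-\tfrac{1}{p-1},1)$ and tracking the absorbed $1/(1-\rho)$ factor is welcome but not a departure in method.
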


\begin{proof} Express the noise covariance matrix as
\begin{equation}
  \label{eq:9}
  \bm \Sigma_\varepsilon = \bm A \left[\rho \bm 1_p \bm 1_p + (1-\rho) \bm
    I \right] \bm A
\end{equation}
where $A_{ii} = \sigma_i$ and $A_{ij} = 0$ for $i\neq j$. Thus,
\begin{equation}
  \label{eq:2}
  \bm\Sigma_\varepsilon^{-1} = \frac{1}{1-\rho}\left[ \bm A^{-2} -
    \frac{\rho}{1+\rho(p-1)}\bm A^{-1}\bm 1\bm 1'\bm A^{-1}\right].
\end{equation}
Now it can be shown that $\bm w_{MAF} =
\bm\Sigma_\varepsilon^{-1}\bm b$ (see \eqref*{eq:24} in Appendix \ref{app:proofs})
and the lemma follows by substitution.
\end{proof}

We now consider the special case where all input time series have the same noise
variance. Substitution into \eqref*{eq:6} gives the MAF coefficient
vector
\begin{equation}
  \label{eq:13}
   w_i \propto \; b_i -
    \frac{\rho}{1+\rho(p-1)}\sum_{j=1}^{p}b_j,
  \quad i = 1,2, \dots, p.
\end{equation}

An alternative way to derive the MAF coefficients in this special case of common noise variance for all time
series is to find the eigenvectors of \eqref*{eq:15} analytically. Using this method also illustrates that only in this special case of common noise variance can the leading MAF coefficient
vector in \eqref{eq:13} be constructed from a linear combination of
the first two PC coefficient vectors. However, when the noise variances are
not all equal, this will not be the case. In fact, no linear combination
of the PCs can be used to obtain the MAF time series. More details are given in Appendix \ref{app:coeffs}.

If furthermore, $\rho=0$, i.e. no cross-correlation between noise time
series, then the MAF and PCA coefficient vectors are the same and are
proportional to signal strength vector $\bm b$.


\section{The MAF methodology on sampled time series}
\label{sec:Methodology}

In the following section, we specify how the MAF methodology is implemented on a given collection of sampled time series. First, an algorithm which calculates the MAF factors is specified. Second, a set of methods for the selection of number of MAFs is discussed. Lastly, we investigate how to quantify the uncertainty in the estimated MAF factors in the S+N model.

\subsection{Calculation of Maximum Autocorrelation Factors}

Consider an input set of $p$ time series, each of which is recorded at $n$ time points. Let the column matrix $\bm Z \in \R^{n \times p}$ with $n > p$ denote this collection of time series. First recall from Equation \ref{eq:15} that the MAF factors are defined as the eigenvectors of $\bm S^{-1/2} \bm S_{\Delta} \bm S^{-1/2}$. The following operations are implemented on the data matrix $\bm Z$ to obtain these eigenvectors.

 First, we transform $\bm Z$ such that its covariance matrix, $\bm S$, is the identity through spectral decomposition. Working with this transformed matrix of new time series we compute the first differences in time and compute the corresponding covariance matrix, $\bm S_\Delta$. Then, we obtain the eigenvectors of the differenced covariance matrix via spectral decomposition. These eigenvectors are in turn transformed back to the original coordinate system to become the columns of the MAF coefficient matrix, defined as $\bm W_{MAF}$. Finally, $\bm W_{MAF}$ is pre-multiplied by $\bm Z$ to yield the MAF factors, and are the orthogonal time series with maximum autocorrelation. Algorithm \ref{alg:MAFvector} formally specifies the calculation of these MAF factors. For the purpose of this algorithm, the covariance operator is defined as $Cov(\bm Z) = \sum_{i=1}^{n} [\bm Z_{i\cdot} - \sum_{k=1}^n \bm Z_{k\cdot}][\bm Z_{i\cdot} - \sum_{k=1}^n \bm Z_{k\cdot}]'$, where $\bm Z_{i\cdot}$ is the $i^{th}$ row of $\bm Z$.

\begin{algorithm}
\caption{Calculate MAF factors, $\bm Y \in \R^{n\times p}$, and MAF coefficients $\bm W_{MAF}(\bm Z).$}
\label{alg:MAFvector}
\begin{algorithmic}[1]
\REQUIRE $\bm Z \in \R^{n \times p}$
  \STATE Calculate $\bm S_{\bm Z} = Cov(\bm Z)$
  \STATE Decompose $\bm S_Z$ such that $\bm S_Z = \bm U \bm D \bm U'$ where $\bm U \in O(p)$ and $D$ is diagonal with the eigenvalues of $\bm \Sigma_Z$.
  \STATE Compute $\bm X = \bm Z \bm U \bm D^{-0.5} \bm U'$.
  \STATE Compute $\Delta \bm X_{i\cdot} = \bm X_{i\cdot} - \bm X_{(i+1)\cdot}$
  \STATE Compute $\bm S_{\Delta} = Cov(\Delta\bm X)$.
  \STATE Decompose $\bm S_\delta = \bm V \bm K \bm V'$ where $\bm V\in O(p)$ and $\bm K$ is diagonal with eigenvalues in increasing order, $ K_{11} \leq K_{22} \leq \dots \leq K_{pp}$.
  \STATE Let $\bm W_{MAF}(\bm Z) = \bm U \bm D^{-0.5} \bm U' \bm V$
  \STATE Compute $\bm Y = \bm Z \bm W_{MAF}(\bm Z) $.
  \STATE Let $\bm Y_{\cdot j}$ be the $j^{th}$ column of $\bm Y$. For each $j$, compute $c_j = \text{sign} \left[\sum_{i=1}^n i \bm Y_{\cdot j}\right]$.
  \ENSURE $c_j \bm Y_{\cdot j}$ for each $j$ and $\bm W_{MAF}(\bm Z).$
\end{algorithmic}
\end{algorithm}

\subsection{Uncertainty quantification}
\label{sec:UncertaintyQuantification}

Often there is a need to understand the sampling variability of the estimated $\bm w_{MAF}(\bm Z)$ and the associated MAF factors in the S+N model. One natural way to undertake this is to use resampling of the data.

In resampling the time series we seek to preserve the underlying signal while resampling the noise. As such, an underlying smooth signal estimate is obtained by smoothing the original time series. Denote the smooth estimate as $\bm \tilde{Z_i}(t)$, for $i = 1, \dots, p$. Possible smoothing techniques include local regression (Loess) \citep{Cleveland1979} and spline smoothers \citep{ESL}.

The residuals between the original and the smooth time series, $\hat{\bm \varepsilon}(t) = \bm Z(t) - \tilde{\bm Z}(t)$, is then resampled and added back to the smooth original time series, $\tilde{\bm Z}(t)$. Resampling can be done in blocks as there is temporal structure in the residuals. Denote the resampled time series as $\bm Z^{\ast}(t)$. With the new set of time series we recompute the MAF coefficients and MAF factors, $\bm w_{MAF}(\bm Z^{\ast})$ and $\bm Y^\ast(t)$.

The above procedure can be repeated to obtain B instances of $Y_i^\ast(t)$ that can be aggregated to obtain a pointwise confidence interval around $Y_i(t)$. In the event that the resampled MAFs are not well centered around the original MAFs $\bm Y(t)$, information might have been lost as the noise vector was resampled, compromising the shape of the MAF in question. This would suggest that there is too much noise present for any isolation of a signal. Thus, a significance test could be employed to determine the MAF's relevance. The procedure to resample the MAF coefficients and MAF factors is formally given in Algorithm \ref{alg:MAFUncertainty}.

\begin{algorithm}
\caption{Resample the MAF factors, $\bm Y \in \R^{n\times p}$}
\label{alg:MAFUncertainty}
\begin{algorithmic}[1]
\REQUIRE $\bm Z \in \R^{n \times p}$
  \STATE For the set of time series $ Z_i(t)$ for $i = 1, ..., p$, calculate $\bm w_{MAF}(\bm Z)$ and $\bm Y(t)$.
  \STATE Create a smooth time series from each original time series $Z_i(t)$ and calculate the residual $\hat{\varepsilon}_i(t) = Z_i(t) - \tilde{Z}_i(t)$.
  \STATE From the set of integers $[1,n]$, draw a sequence $\{s_i\}$ of $n$ integers with replacement, yielding the sequence ${s_1}, {s_2}, \dots, {s_n}$. The new set of residuals becomes $\hat{\varepsilon}_i({s_1}), \hat{\varepsilon}_i({s_2}), \dots, \hat{\varepsilon}_i({s_n})$ which we shall call $\hat{\varepsilon}_i^{\ast}(t)$. This step could be modified to resample blocks of residuals.
  \STATE Let $Z_i^{\ast}(t) = \tilde{Z}_i(t) + \hat{\varepsilon}_i^{\ast}(t)$.
  \STATE Calculate new MAFs from the resampled data, $\bm Y^\ast(t) = \bm Z^\ast \bm w_{MAF}(\bm Z^{\ast})$.
  \STATE Repeat steps 3-5 B times.
  \ENSURE B realizations of $\bm Y^\ast(t)$ and $\bm w_{MAF}(\bm Z^{\ast})$.
\end{algorithmic}
\end{algorithm}

\subsection{Selection of number of MAFs}

In real applications, the number of underlying signals is often unknown. Determining the number of underlying signals can be done in various ways. First, one can find that the eigenvalues of $\bm S^{-1/2} \bm S_\Delta \bm S^{-1/2}$ and plot them in an ``autocorrelation scree plot'' similar to what is done in PCA. Using this plot, one can look for the presence of a shoulder to define the number of MAFs one should retain. Alternatively, one could define a cutoff after some fraction $\alpha$ (such as $95 \%$), of the total autocorrelation that is contained in preceding MAFs.

A second method employs cross validation to find the number of underlying signals. By defining a hold-out block one can regress each original time series in $\bm Z$ on the $k$ first MAFs for $k = 1, \dots, p$. Then, select $k$ such that the RMSE on the hold-out block is minimized.

 A third method involves using the framework of hypothesis testing, the description of which is deferred to the section on statistical inference in Subsection \ref{sec:inference}.

\

\section{Statistical properties}
\label{sec:samplingProperties}
Having looked at the model properties of MAF and PCA, we now turn to
their sampling properties. This section is divided into three parts. In the first
part, we show that the sample covariance and lagged covariance yield
consistent estimates of MAF coefficients
and MAF factors under the S+N model as the number of time steps grows. PCA
estimates are treated similarly. In the second part, a simulation study is conducted to compare MAF and PCA as signal
recovery techniques. We use the signal cross-correlation with MAF and PC time series as
the metric of comparison, and find that MAF is both more resilient to
increased noise and more suitable when the noise has
cross-correlation. In the third subsection, we introduce a hypothesis
testing framework to test if an underlying time trend extracted by MAF
is statistically significant.

\subsection{Consistency}

The following theorem shows that as the number of time steps grows for
a $p$-variate time series, the MAF and PCA coefficients will
converge to their model values.

\begin{thm}
\label{theorem1a}
Consider a set of time series $Z_n(t) \in \R^p$, such that
\begin{align}
  \label{eq:21mod}
  \bm Z_n(t) =& f_n(t) \bm b + \bm \varepsilon_n(t) \quad t= 1, \dots, n
  \nonumber \\
  \bm \Delta \bm Z_n(t) =& \bm Z_n(t) - \bm Z_n(t+1) =  \Delta f_n(t) \bm b + \Delta \bm \varepsilon_n(t)
\end{align}
with $f_n(t) \in \R \:\: \forall t = 1,2,...,n , \bm b, \bm \varepsilon_n(t)  \in \R^p\:\: \forall t = 1,2,...,n $, $\Delta
\bm \varepsilon_n = \bm \varepsilon_n(t) - \bm \varepsilon_n(t+1)$,
and $\Delta f_n(t) = f_n(t) - f_n(t+1)$. Residual time series $\bm\varepsilon_n$ is a
weakly stationary $p$-variate time series and the associated autocovariance is
absolutely summable. The
signal time series is such that
\begin{equation}
  \label{signalConditions}
\frac{1}{n}\sum_{t=1}^{n} f_n(t) = 0, \quad
\frac{1}{n}\sum_{t=1}^{n} f_n^2(t) = 1, \quad
\frac{1}{n-1}\sum_{t=1}^{n-1} [\Delta f_n(t) - \overline{\Delta f_n(t)}]^2 = a, \quad \forall n,
\end{equation}
where $\Delta f_n(t) = \frac{1}{n}(f_n(1) - f_n(n))$.

 Then,
 \begin{align}
 \bm S_{n} \overset{p}{\to}&  \bm\Sigma \text{  as }n \to \infty, \nonumber\\
\bm S^{-1/2} \bm S_\Delta \bm S^{-1/2} \overset{p}{\to}& \bm \Sigma^{-1/2}  \bm \Sigma_\Delta \bm \Sigma^{-1/2} \text{  as }n \to \infty \nonumber\\
\end{align}
\end{thm}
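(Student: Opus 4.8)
The plan is to establish the two convergence statements by a standard weak--law argument for the sample (cross-)covariance matrices of a signal-plus-noise process, followed by an application of the continuous mapping theorem. First I would write the sample covariance matrix $\bm S_n$ explicitly in terms of the model decomposition $\bm Z_n(t) = f_n(t)\bm b + \bm \varepsilon_n(t)$. This splits $\bm S_n$ into three pieces: a pure-signal term proportional to $\frac{1}{n}\sum_t (f_n(t)-\bar f_n)^2\,\bm b\bm b'$, a pure-noise term $\frac{1}{n}\sum_t \bm\varepsilon_n(t)\bm\varepsilon_n(t)'$ (centered), and two cross terms of the form $\bm b\cdot\frac{1}{n}\sum_t (f_n(t)-\bar f_n)\,\bm\varepsilon_n(t)'$. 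By the normalization conditions in \eqref{signalConditions}, the signal term converges deterministically to $\bm b\bm b'$ (since $\frac1n\sum f_n^2(t)\to 1$ and $\frac1n\sum f_n(t)\to 0$). For the noise term, weak stationarity plus absolute summability of the autocovariance is exactly the hypothesis needed so that the sample autocovariance is mean-square (hence in-probability) consistent for $\bm\Sigma_\varepsilon$; this is a classical ergodic-type result for second moments of weakly stationary sequences. The cross terms I would show are $o_p(1)$: each entry is $\frac1n\sum_t g_n(t)\varepsilon_{n,k}(t)$ where $g_n(t)=f_n(t)-\bar f_n$ is a bounded deterministic sequence with $\frac1n\sum g_n^2(t)=O(1)$, so its variance is $\frac{1}{n^2}\sum_{s,t} g_n(s)g_n(t)\,\mathrm{Cov}(\varepsilon_{n,k}(s),\varepsilon_{n,k}(t))$, which is $O(1/n)$ by absolute summability of the noise autocovariance together with Cauchy--Schwarz on the $g_n$ factors. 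Summing the three contributions gives $\bm S_n \overset{p}{\to} \bm\Sigma := \bm b\bm b' + \bm\Sigma_\varepsilon$.

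Next I would treat $\bm S_\Delta$, the sample covariance of the differenced series $\Delta\bm Z_n(t) = \Delta f_n(t)\,\bm b + \Delta\bm\varepsilon_n(t)$, by exactly the same three-way split. The differenced-signal term converges to $a\,\bm b\bm b'$ by the third condition in \eqref{signalConditions} (note $\Delta f_n(t)$ as defined there has vanishing mean and sample variance $a$). The differenced-noise term: $\Delta\bm\varepsilon_n$ is again weakly stationary (a linear filter of a weakly stationary series) with absolutely summable autocovariance, so its sample covariance converges in probability to $\mathrm{Var}(\Delta\bm\varepsilon_n(t)) = 2\bm\Sigma_\varepsilon - \mathrm{Cov}(\bm\varepsilon(t),\bm\varepsilon(t+1)) - \mathrm{Cov}(\bm\varepsilon(t+1),\bm\varepsilon(t)) = \bm\Sigma_\Delta$. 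The cross terms vanish in probability by the same Cauchy--Schwarz/summability estimate as before, now with $\Delta f_n$ and $\Delta\bm\varepsilon_n$ in place of $f_n$ and $\bm\varepsilon_n$. This yields $\bm S_\Delta \overset{p}{\to} \bm\Sigma_\Delta$.

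Finally, since $\bm\Sigma$ is assumed positive definite, the map $(\bm A,\bm C)\mapsto \bm A^{-1/2}\bm C\,\bm A^{-1/2}$ is continuous on a neighborhood of $(\bm\Sigma,\bm\Sigma_\Delta)$ in the space of symmetric matrices (matrix inversion and the symmetric square root are continuous where the argument is positive definite). The continuous mapping theorem applied to the joint convergence $(\bm S_n,\bm S_\Delta)\overset{p}{\to}(\bm\Sigma,\bm\Sigma_\Delta)$ — which follows from the marginal in-probability convergences established above, since convergence in probability of each component implies joint convergence in probability of the pair — then gives $\bm S_n^{-1/2}\bm S_\Delta\bm S_n^{-1/2}\overset{p}{\to}\bm\Sigma^{-1/2}\bm\Sigma_\Delta\bm\Sigma^{-1/2}$, completing the proof. (One may optionally note that the subsequent claimed convergence of eigenvectors follows by a further continuous-mapping step, provided the leading eigenvalue is simple.)

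I expect the main obstacle to be the rigorous justification of the consistency of the sample noise (auto)covariance matrices $\frac1n\sum\bm\varepsilon_n(t)\bm\varepsilon_n(t)'$ and $\frac1n\sum\Delta\bm\varepsilon_n(t)\Delta\bm\varepsilon_n(t)'$ for a merely weakly stationary array with absolutely summable autocovariance — this is where one must be careful, because weak stationarity controls only second moments, so a mean-square consistency argument (computing the variance of each sample-covariance entry and showing it is $o(1)$ via absolute summability of fourth-order cumulants or a more restrictive mixing/moment assumption) is needed, and the precise hypotheses stated in the theorem should be checked to see whether they are genuinely sufficient or whether an additional fourth-moment or mixing condition is implicitly required. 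Everything else — the deterministic signal limits, the vanishing of the cross terms, and the continuous-mapping finale — is routine.
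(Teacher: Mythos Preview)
Your proposal is correct and follows essentially the same three-step architecture as the paper's proof: a signal/noise/cross-term decomposition of $\bm S_n$ and $\bm S_{\Delta n}$, convergence of each piece (deterministic for the signal, ergodic-type for the noise, $o_p(1)$ for the cross terms), and the continuous mapping theorem for the product $\bm S^{-1/2}\bm S_\Delta\bm S^{-1/2}$. Your variance bound for the cross terms is more explicit than the paper's one-line justification, and your closing caveat about whether the stated second-order hypotheses truly suffice for the noise sample covariance is a legitimate technical point that the paper handles only by citation to Doob and Durrett.
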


\begin{proof}
See Appendix \ref{app:proofs}.
\end{proof}

\subsection{Simulation study}
\label{sec:simulationStudy}

We now undertake a simulation study to compare the MAF and PCA proceedures as signal recovery techniques. First, we generate $100$ simulations of $p=3$ parallel time series of length
$n=150$, using the S+N model of Section
\ref{sec:snModel}, viz.
\begin{equation}
  \label{eq:16}
  \bm Z(t) = f(t) \bm b + \bm \varepsilon(t)
\end{equation}
where $f(t)$ is a specified underlying signal time series shown in
\fig{signal}. This time series is a rescaled and interpolated version
of the mean annual surface time series for the northern hemisphere for
the years 1850-2007, taken from \citet{Mann2008}. The vector $\bm b$ is the
$p$-vector of signal strengths. $\bm\varepsilon(t)$ is an $iid$
zero-mean Gaussian noise $p$-vector for $t = 1, \dots, n$. The $3 \times 3$ cross-covariance
matrix for $\bm\varepsilon$ has a unit diagonal and common value
$\rho$ in the off-diagonal entries.

\EPSFIG[scale=0.35]{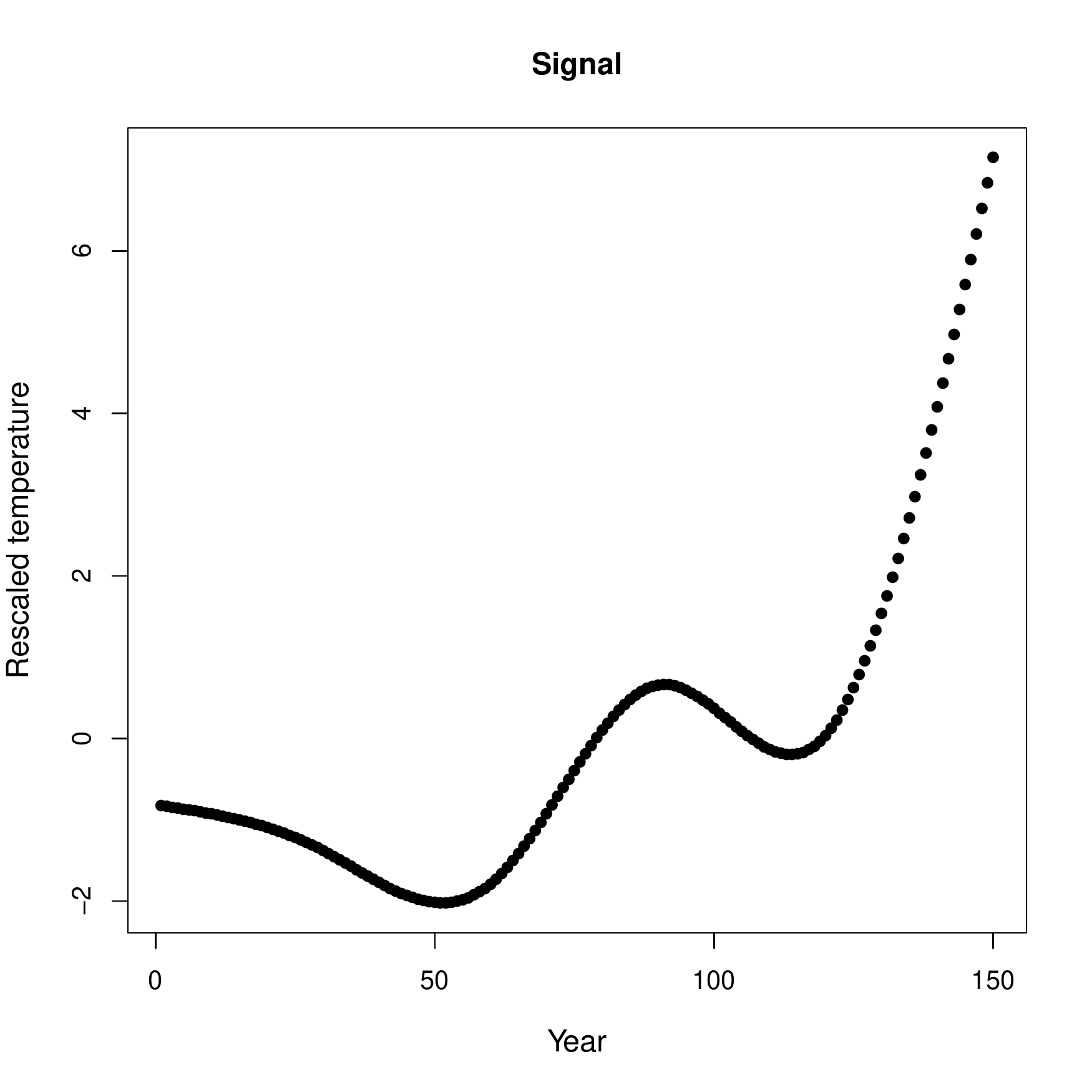}{Signal used in the signal recovery
  process, shown with a mean zero and unit variance.}{signal}

For each of the 100 realizations of the three parallel time series we
compute the combined MAF$(t)$ time series and the combined PCA$(t)$
time series. The cross-correlations of the MAF time series and the PCA
time series with the true signal time series as given in Figure \ref{signal} are used as a metric for
comparison.

Two specific simulations of the data are shown in \fig{ProxyChangingNoise}
with their associated smoothed MAF and PCA time series on the right. We use a LOESS filter to smooth the time series\footnote{We use local regression to smooth with 60 years in the span and tricubic weighting. The equivalent span as a fraction of the total time series is $60/150 = 2/5$ and the tricubic weight go as $(1- (d/60)^3)^3$ with $d$ the distance from the point of interest.}. The first
row of \fig{ProxyChangingNoise} shows the three parallel time series
with $\bm b= (0.8, 0.4, 0.2)$ and cross-correlation $\rho = 0.25$. The second row shows a
parallel time series with a weaker signal strength with $\bm b = (0.4,
0.2, 0.1)$ and $\rho = 0.25$. The results of the analysis are compelling. Namely, the cross-correlation of the MAF time series with the underlying
signal for the first row of \fig{ProxyChangingNoise} is $0.67$ and the
PCA equivalent is $0.52$, while for the second row MAF and PCA cross-correlations are $0.35$ and $0.08$, respectively.

\EPSFIG[scale=0.7]{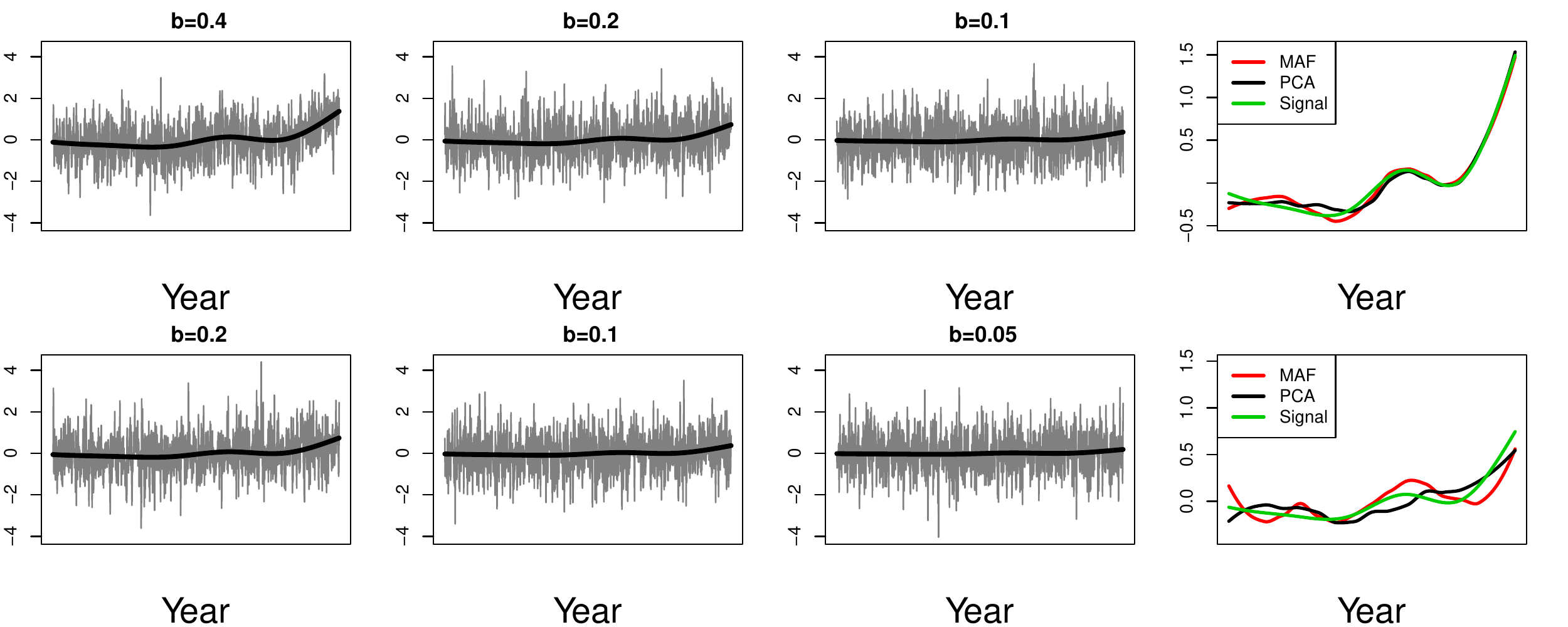}{Top: One realization of the
  data with the signal shown in bold on top of each time series. The
  Signal-to-Noise Ratio (SNR), corresponding to $\bm b=(0.8, 0.4, 0.2)$, is annotated above each figure and each time series has been scaled to have unit
  variance and zero mean. The smoothed MAF1 and PC1 are shown on the
  right. Bottom: Same as top with different SNR, $\bm b=(0.4, 0.2,
  0.1)$. The noise cross-correlation, $\rho = 0.25$.}{ProxyChangingNoise}

We proceed to undertake further analysis of this model in order to fully understand how MAF and PCA perform when the cross-correlation, $\rho$ and the signal strength vector $\bm b$ changes. The full set of scenarios that we consider in this example are:
\begin{itemize}
\item  A fixed signal strength vector, $\bm b=(0.8,0.4,0.2)$, with
  changing noise cross-correlation $\rho$.
\item A fixed noise cross-correlation $\rho=0.25$ with changing $\bm b =
  (0.8c,0.4c,0.2c)$ for $c\in[0.5,2.5]$.
\end{itemize}

\fig{CORCrossAndSNR} contains plots of signal cross-correlations with
MAF and PCA time series for each of the parameter
combination scenarios. Each plotted point represents an average over 100
simulations. MAF yields higher correlation with the signal uniformly. It is clear that MAF takes advantage of
cross-correlation in the noise and uses it to amplify the signal,
while PCA fails to exploit this property in the noise and thus
under-performs compared to MAF.

The signal information contained jointly across PC1 and PC2 is also less than the information contained in MAF1 only. This can be seen by regressing the signal of both PC1 and PC2 and extracting the root of the $R^2$ value. This is the multivariate equivalent to correlation between the a signal and a signal estimate. This result is also shown in \fig{CORCrossAndSNR} under the legend PC$1+2$.

\EPSFIG[scale=0.65]{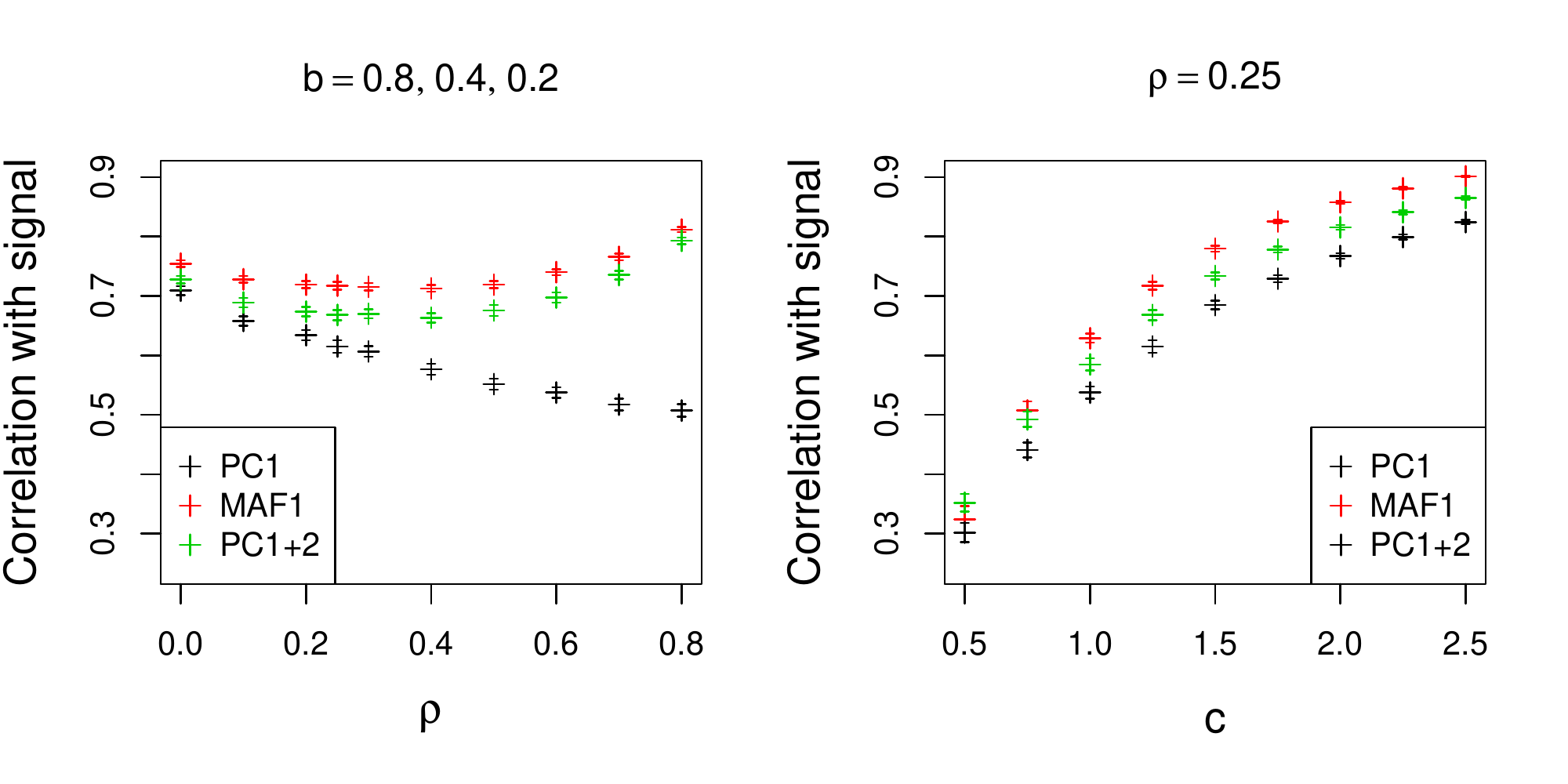}{Left: Correlation of signal
  estimate, using MAF or PCA, with true signal while changing cross-correlation
  of noise. Right: Correlation of signal
  estimate, using MAF or PCA, with true signal while multiplying the signal
  strength vector
  by a factor as shown on the x-axis. The error bars show twice the standard
  error from the mean using 100 repetitions. }{CORCrossAndSNR}

\subsection{Hypothesis Testing}
\label{sec:inference}

Consider the same $p$-variate time series of length $n$ as described in \eq{eq:16}.
One might want to test whether a time signal is indeed present in the
data or not. We consider the following hypotheses:
\begin{align}
\label{eq:hypTest}
H_0: &   \bm Z_n(t) =  \bm \varepsilon(t) \nonumber\\
H_A: & \bm Z_n(t) = f(t) \bm b + \bm \varepsilon(t), \quad f(t) \neq
 \text{constant},
\end{align}
where $\bm\varepsilon(t)$ is an $iid$ zero-mean Gaussian
noise p-vector time series with cross-correlation $\rho$ and unit variance.

To test for the presence of a signal in a MAF we introduce empirical signal-to-noise ratio, $SNR_{empir}(x(t))$, as a function of an arbitrary time series $x(t)$ for $t=1, ..., n$, using a smooth version of $x(t)$, called $\tilde{x}(t)$,
\begin{equation}
   SNR_{empir}(x(t)) = \frac{SD(\tilde{x})}{SD(x - \tilde{x})},
 \end{equation}
surpressing the $t$ argument for brevity and with $SD = \sqrt{\sum_{t=1}^n\left[x(t) - \sum_{k=1}^n x(k)\right]^2}$.

Under the null model, there is no signal. So, subtracting out a smooth trend should not affect the corresponding null distribution. The only difference would be the slightly reduced degrees of freedom of the $\chi^2$ associated with the residuals after regressing on the smooth. This is accounted for by inflating the residuals by a factor of $\frac{n}{n-\nu_i}$ where $\nu_i$ is the degrees of freedom associated with each smoothed time series. We then resample the resulting residuals recompute the test statistic. Resampling can be done in blocks if there is temporal structure in the residuals.

From the newly created time series, we can obtain new MAF factors. Through the test statistic, we see whether the first MAF factor remain after the noise estimate has been shuffled, an operation that should preserve the MAF factor if it indeed represents a signal.

Algorithm \ref{alg:hypTestNrMafs} provides the details to the hypothesis testing procedure.
\begin{algorithm}[H]
\caption{Calculate number of MAF factors, $\bm Y \in \R^{n\times p}$}
\label{alg:hypTestNrMafs}
\begin{algorithmic}[1]
\REQUIRE $\bm Z \in \R^{n \times p}$
  \STATE For each time series $ Z_i(t)$ for $i = 1, ..., p$, calculate $SNR_{empir}(Z_i(t))$.
  \STATE Create a smooth time series from each original time series $Z_i(t)$ and calculate the residual $\hat{\varepsilon}_i(t) = Z_i(t) - \tilde{Z}_i(t)$.
  \STATE From the set of integers $[1,n]$, draw a sequence $\{s_i\}$ of $n$ integers with replacement, yielding the sequence ${s_1}, {s_2}, \dots, {s_n}$. The new set of residuals becomes $\hat{\varepsilon}_i({s_1}), \hat{\varepsilon}_i({s_2}), \dots, \hat{\varepsilon}_i({s_n})$ which we shall call $\hat{\varepsilon}_i^{\ast}(t)$.
  \STATE Let $Z_i^{\ast}(t) = \sqrt{\frac{n}{n-\nu_i}} \hat{\varepsilon}_i^{\ast}(t)$.
  \STATE Calculate new MAFs from the resampled data, $\bm Y^\ast(t) = \bm Z^\ast \bm w_{MAF}(\bm Z^{\ast})$, and their associated $SNR_{empir}(\bm Y_i^\ast(t))$ for $i=1, ..., p$.
  \STATE Repeat steps 3-5 B times.
  \STATE Calculate MAF1's associated p-value,
  \begin{equation}
    p = \frac{
  \{\#\text{ of } SNR_{empir}(\bm Y_1^\ast(t)) > SNR_{empir}(\bm Y_1(t))\}}{B}
  \end{equation}
\end{algorithmic}
\end{algorithm}

The empirical SNR is used as test statistic since it's model counterpart maximizes the expected likelihood under the mode described in Equation \ref{eq:hypTest2} and proved in Lemma \ref{lem:expectedLikelihood}. Sample autocorrelation was also explored as a test statistic but was found to be less powerful than empirical SNR.

In resampling the residuals one can sample with or without replacement, where the former is referred to as the bootstrap. Permuting the residuals, i.e. resampling without replacement allows for exact type 1 error control
because we sample from the population as opposed to an estimate of the
population which is the case for the bootstrap. Furthermore, the validity of the bootstrap depends on the
empirical distribution's asymptotic convergence to the population
distribution, but the permutation test does not have this
requirement.

However, if there is autocorrelation present in the residuals, one would normally the data in blocks to account for the temporal structure. In this case, permuting the data is less suitable due to the smaller number of permutations possible. Sampling with replacement does not have a reduction in the number of possible combinations and might thus be there method of choice.

This hypothesis testing procedure can be extended to multiple signals. Because MAF solves an eigenvalue/eigenvector problem the MAF factors are orthogonal. As a corollary, the second MAF maximizes autocorrelation on a dataset that lies in the space perpendicular to the first MAF. Similarly, the third lies in the space perpendicular to the first two MAFs. In this vein, each MAF will produce a signal estimate orthogonal to the other MAFs. In the hypothesis testing framework, one would test whether each signal estimate is significant or not. Our method for creating the null distribution outlined in the single-signal case would still be valid. The only difference would be that multiple signals are subtracted out of the dataset followed by a permutation of the residuals.  To reflect the multiple signal extension in Algorithm \ref{alg:hypTestNrMafs}, one would only replace Step 7 by the calculation of the empirical SNR and p-values of MAFs 1 through $k$ for $k=1,...,p$.

We continue using the two examples presented in
\fig{ProxyChangingNoise}. The top panels show three time series where the
signal strength vector $\bm b_A = (0.4, 0.2, 0.1)$, while the lower panels
contain a weaker signal strength, $\bm b_B= (0.4, 0.2,
0.1)/2$. Furthermore, the
smooth line in each panel is the underlying signal before the noise is
added, while the gray lines show the raw observations used to
calculate the MAF transformation.

 The MAF SNR distributions are shown
in \fig{inferenceResults}, where the solid vertical
  line is the SNR of the original observations while the
  histograms represent the SNR of 1000 sets of permuted
  observations. The $p$-value represents the probability of the
  observed MAF SNR under the null hypothesis, i.e. the
  absence of a signal. In the
  strong-signal case the $p$-value is 0, corresponding to the left
  panel, while the weak-signal case has a $p$-value of 0.893,
  corresponding to the right panel of \fig{inferenceResults}.

\EPSFIG[scale=0.45]{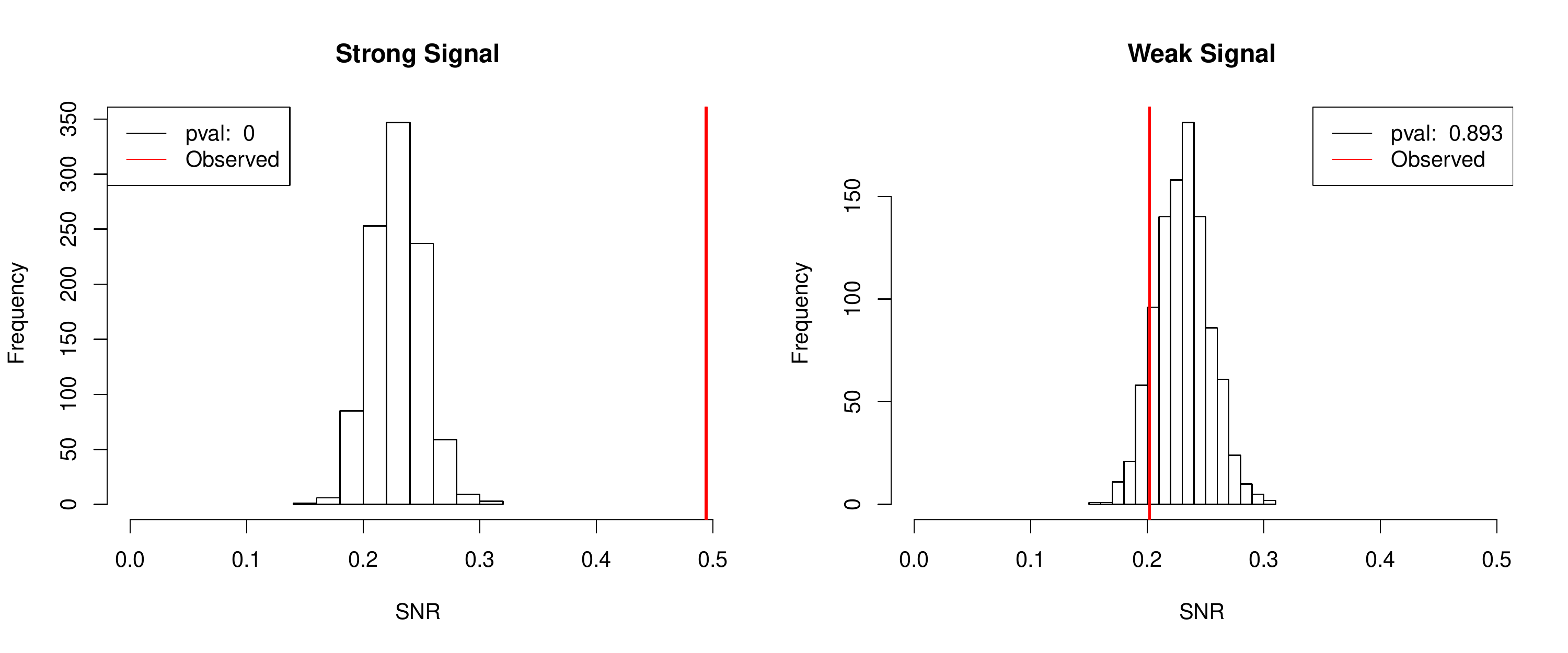}{The distributions of the maximized
  SNR under the null hypothesis, where the original data
  has been resampled by permuting the time steps. The black vertical
  line shows the original SNR, with the strong-signal example on the left
  and the weak-signal example on the right.  }{inferenceResults}

We proceed by calculating the power of the test under various signal
strengths. To
calculate the power we do the following,
\begin{enumerate}
\item Simulate B instances from the null model with no signal and calculate the associated test statistics.
\item Find the $(1-\alpha)^{th}$ quantile of the null distribution and call it $T_{1-\alpha}$.
\item Simulate B instances of the alternative and calculate the associated test statistics.
\item Find the area under the curve of the alternative distribution for which the test statistics are greater than $T_{1-\alpha}$. This area is the power.
\item Repeat steps 3-4 for different signal strengths.
\end{enumerate}

A plot of the power as a function of the power for a
number of signal strengths is shown in \fig{powerVsNoise}, with $B =
5000$, and $\rho = 0.5$. The x-axis represents the coefficient by which the base signal vector, $\bm b=(0.8, 0.4, 0.2)$, is multiplied. Both SNR and autocorrelation are used as test statistics and shown in separate panels. Note that SNR has higher power than autocorrelation.

\EPSFIG[scale=0.55]{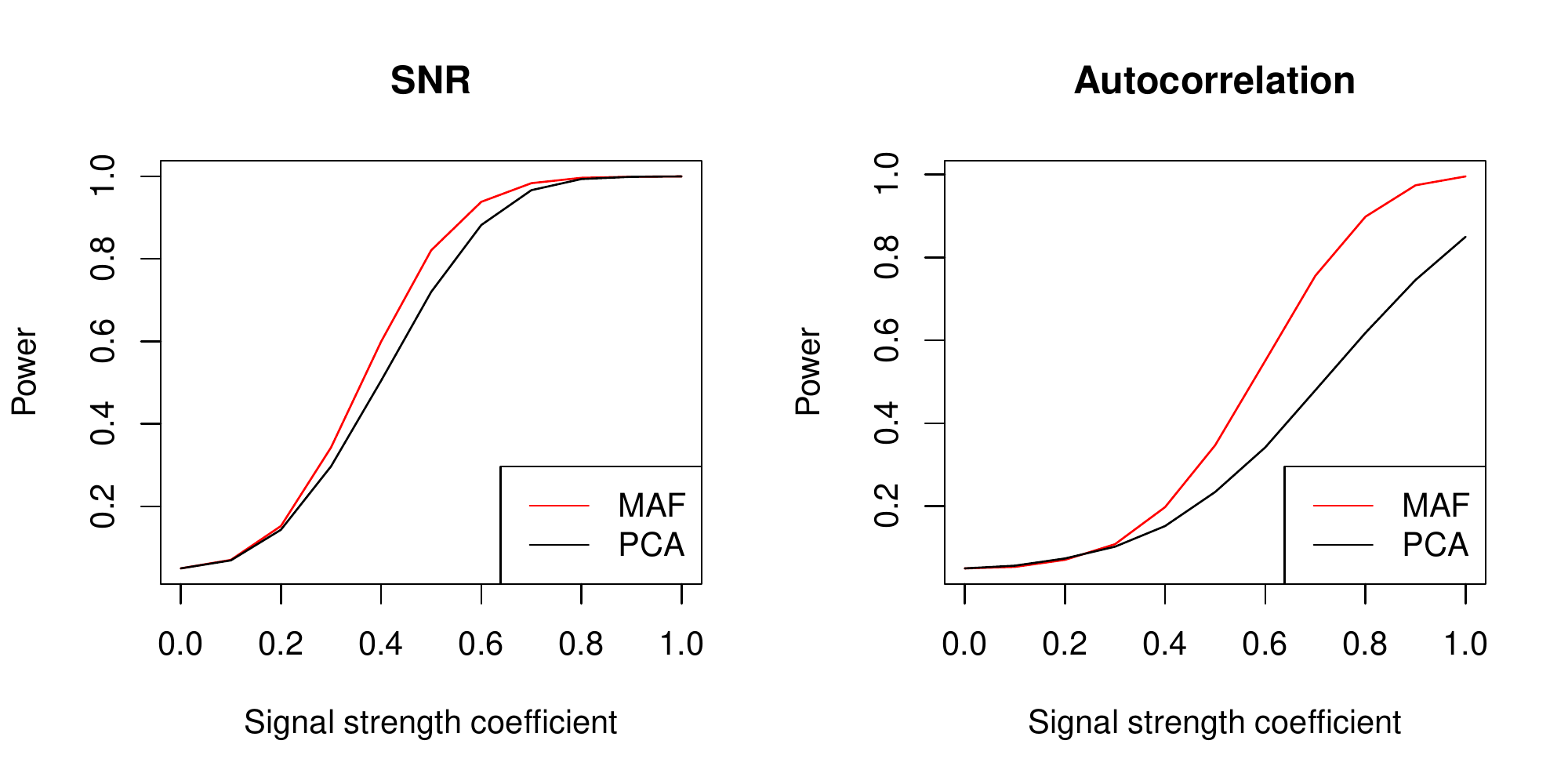}{The power of the hypothesis test as a
  function of the
  signal-multiplication factor. The base signal strength
  vector $\bm b = (0.8, 0.4, 0.2)$ is multiplied by $c \in [0, 1]$.}{powerVsNoise}


\section{Real Data: Application to Tree rings}
\label{sec:applications}
 To illustrate the efficacy of the MAF methodology, an application using the tree ring data from Western United States is presented in this section. First, we extract the MAFs to obtain an estimate of the underlying signal(s) present in the data. Uncertainty of the MAF factors is then estimated. Lastly, we test for the significance of the underlying signals through the hypothesis testing framework introduced in previous section.

\textbf{Overview of the data:} The data is obtained from the \citet{Mann2008}
and quantifies the annual growth of tree rings. It has been
pre-processed as described in Mann \emph{et al}'s Supplemental
Section. We selected 21 concurrent tree ring time series for the
period 1850-1999, of which 4 were already shown in
\fig{simpleExample}. \fig{TreeringFromCluster1} shows all 21 time series, scaled,
centered, and annotated by their
names\footnote{The raw data was download from the Supplemental section
from Mann at
\url{http://www.meteo.psu.edu/holocene/public_html/supplements/MultiproxyMeans07/}}. Some
time series show more temporal coherence than others. The goal here is to
extract a common underlying temporal signal.

\begin{sidewaysfigure}
    \includegraphics{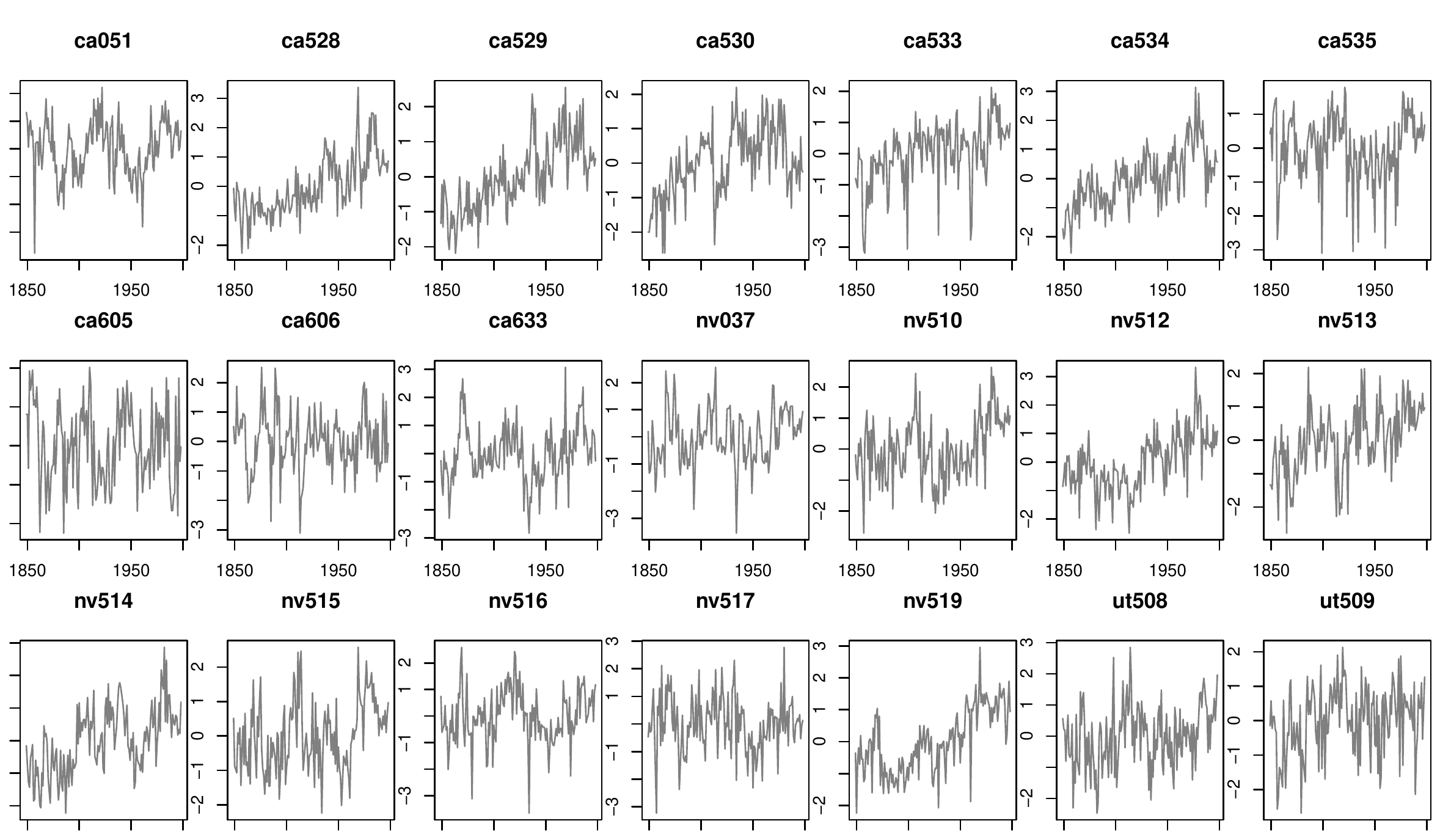}
    \caption{21 tree ring time series from the W77estern United States,
      centered and scaled.}
    \label{TreeringFromCluster1}
\end{sidewaysfigure}

\textbf{MAF estimation:} The 3 first MAFs and PCs are shown in \fig{mafpcaFromCluster1} where
each time series is annotated by its sample autocorrelation. A smooth
version of each time series is shown in bold with 30 years per
knot starting at the last year. Note that MAF produces time series that are more autocorrelated than PCA and are sorted in decreasing autocorrelation.

\EPSFIG[scale=0.55]{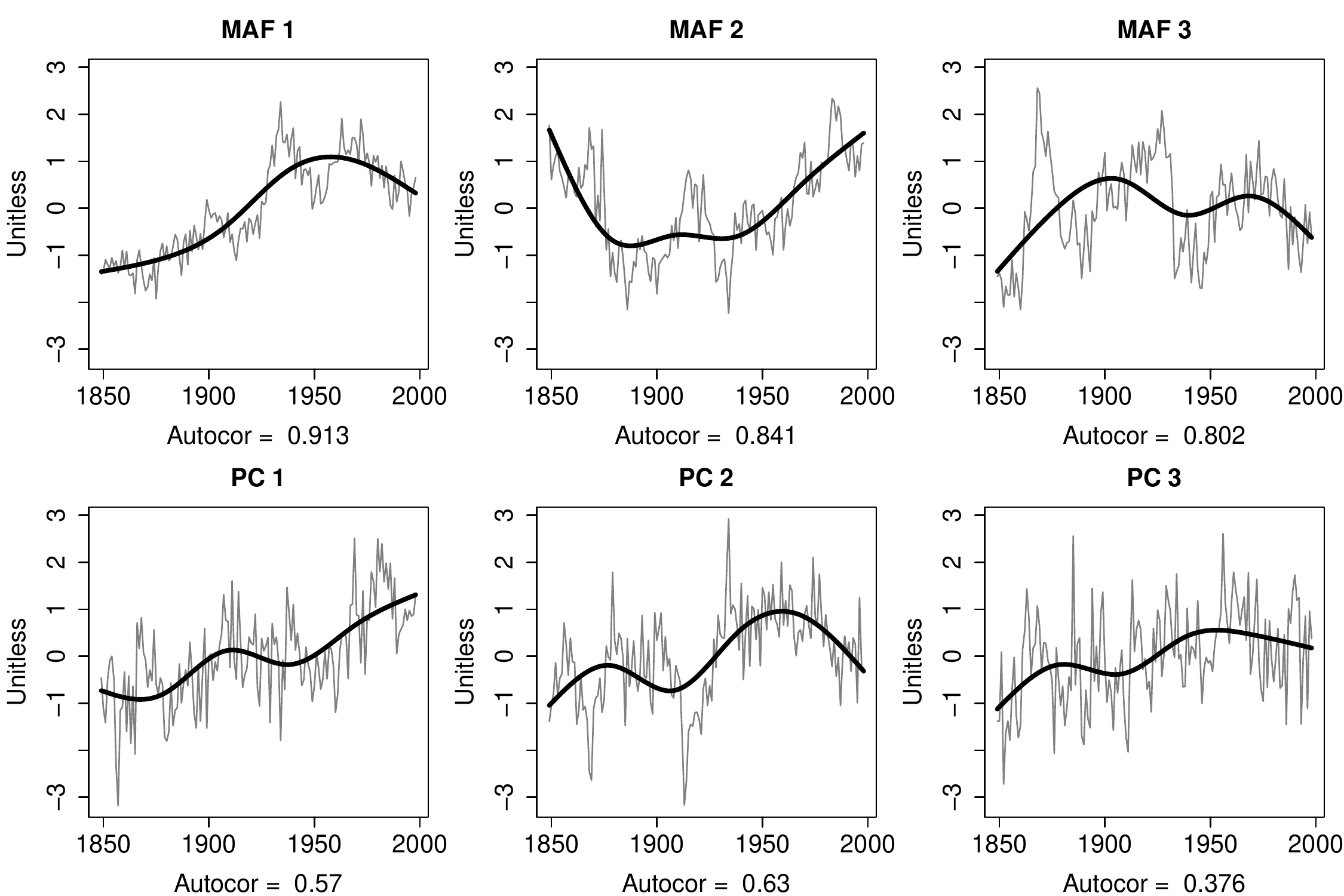}{Centered and scaled MAFs and PCs of the
  tree rings with smoothed equivalents shown in
  bold. }{mafpcaFromCluster1}

\textbf{Uncertainty quantification:} Quantifying uncertainty of the estimated MAF factors can be obtained
through the methodology outlined in subsection \ref{sec:UncertaintyQuantification}. A plot of this is shown in
\fig{TreeringUncertainty}, where 1000 resampled datasets are created by
doing a block bootstrap with a block size of 5 years.

Each new MAF
is created by using normalized MAF coefficients. The smoother
applied is the same LOESS smoother as in Section \ref{sec:simulationStudy}. We see a clear signal
present in the first two MAFs with the confidence bands containing the
original smooth MAFs. However, the third MAF time series' (MAF3) original estimate can
be seen almost outside the confidence interval. This suggests that MAF3 is mainly composed of noise such that when the tree
ring data is resampled and the MAF is recalculated the trend
associated with MAF3 disappears.

\textbf{Signal concentration:} The signal information also seems to be more focused in fewer MAFs compared with PC. This can be illustrated by obtaining the canonical correlation between the data set time series and the smooth MAFs and PCs. The first canonical correlation gives the linear combination of the data time series most correlated with a linear combination of MAFs/PCs. This can be interpreted as the correlation with a potential underlying signal.

\fig{canonicalMafPca} shows a sampling distribution of the canonical correlations for each data set previously obtained through resampling and the associated smooth MAFs and PCs. We see that MAF has a consistently higher canonical correlation until we include three components, at which point the two methods equalize. The result is even more pronounced using unsmoothed MAFs/PCs. Notice also that the MAF distributions are narrower than those of PCA. This is due to the smaller uncertainty about the MAF factor estimates shown in \fig{TreeringUncertainty}.

\EPSFIG[scale=0.55]{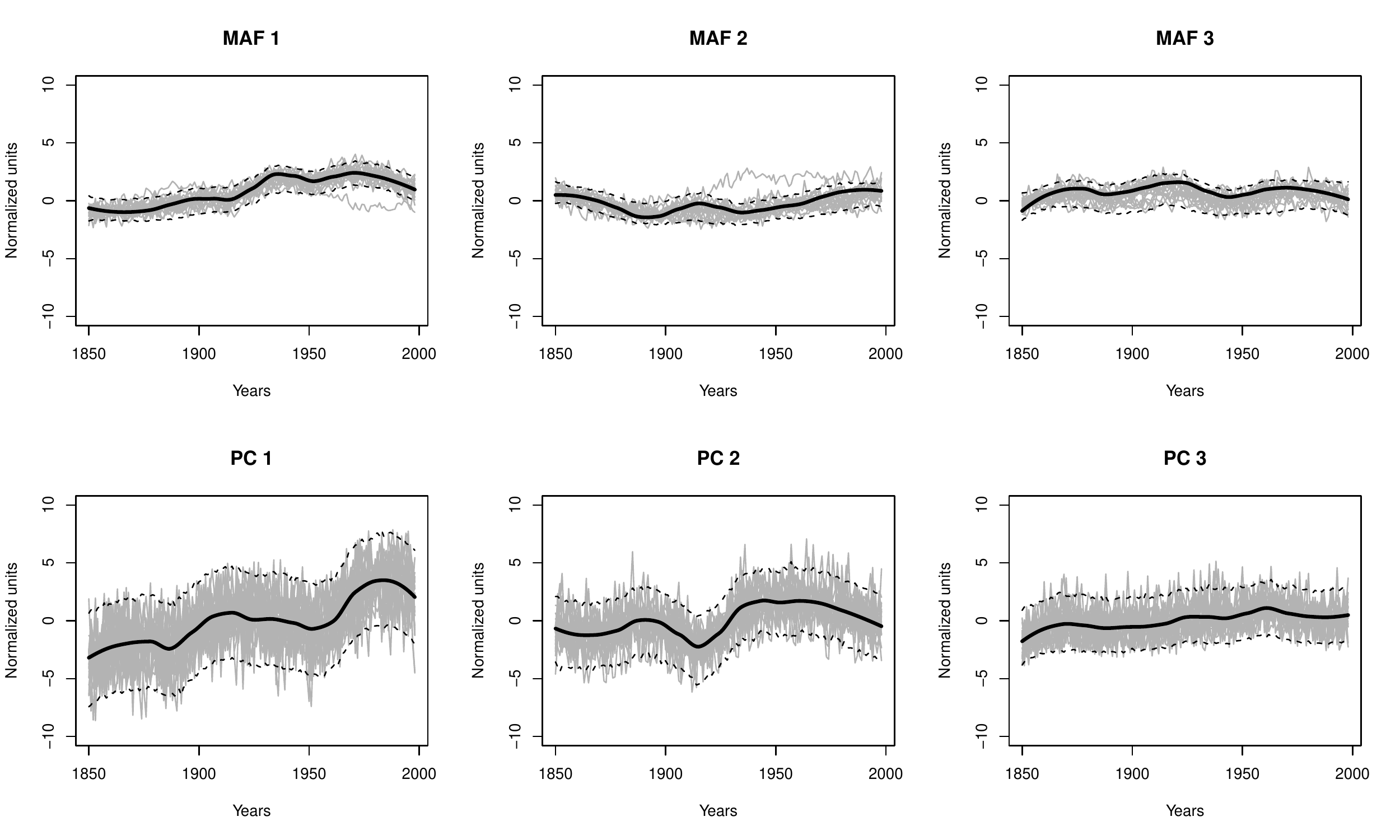}{A set of 1000 resampled MAF
  factors using the block bootstrap. The thick green line shows the original
MAF with a smoother applied, while the grey lines are the un-smoothed
resampled MAFs. Dashed lines show the $95^{th}$ confidence
bands. }{TreeringUncertainty}

\EPSFIG[scale=0.55]{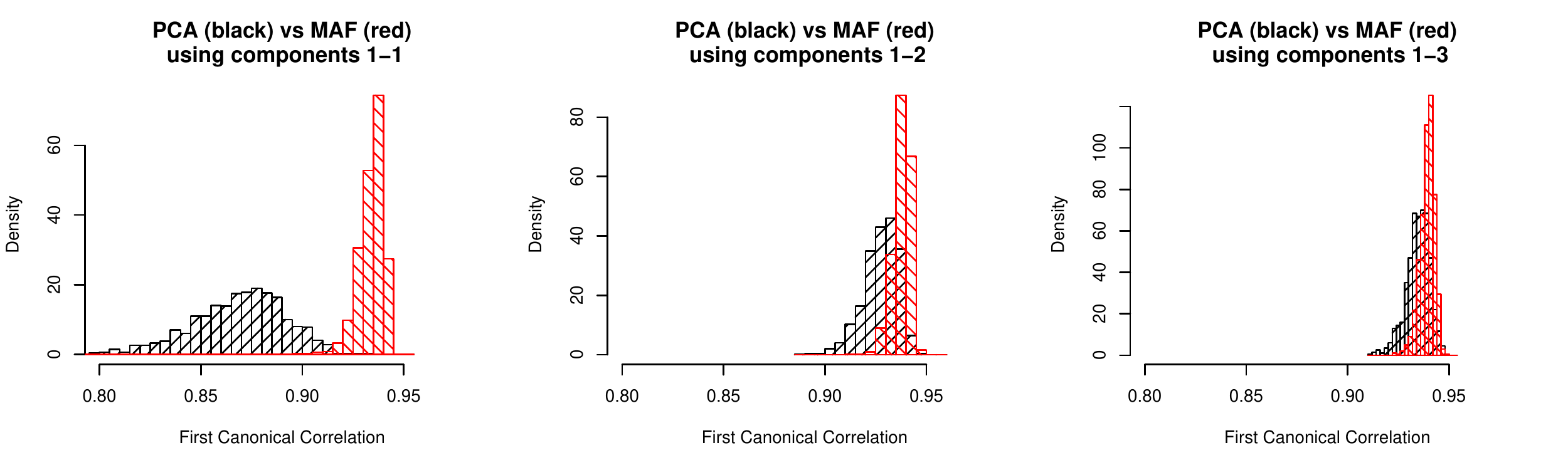}{The first canonical correlation between the resampled datasets and the associated MAF and PCA components 1 to $k$ for $k=1,2,3$.}{canonicalMafPca}


\section{Discussion}
\label{sec:conclusion}
We demonstrated advantages of the MAF optimization criterion in comparison
with PCA for the purpose of extracting a common time trend component
from multiple concurrent time series. In particular, under a model
where each time series is a combination of the underlying time trend
with additive noise, we showed that the MAF-optimized linear
combination of time series, i.e., maximizing autocorrelation, also
maximizes the signal-to-noise ratio among all possible linear
combinations. The sub-optimality of PCA can become worse as the number
of available time series grows, as the cross-correlation between time
series increases, and as the noise levels increase. We also
investigated some sampling properties of the MAF analysis and showed
through simulations that the MAF-optimized combined time series can be
statistically more stable than the corresponding PCA-optimized time
series obtained from the same set of concurrent time series data.

We generalize the signal-plus-noise model to include $q$ multiple underlying signal time series embedded in $p \geq q$ time series. Considering a response variable comprised of linear combinations of these predictive signals, we showed that the first $q$ MAFs span the same space as the first $q$ CCFs. And since CCA by definition maximizes correlation with the signals, the corresponding $q$ dimensional subspace spanned by the first $q$ CCFs is optimal when regressing the signal onto these. So, by transitivity, the first $q$ MAFs then also contain the optimal subspace for replicating the response. The advantage of MAF is that knowledge of the shape of the signal is not necessary. So, MAF compresses a $p$-dimensional time series into a $q$ dimensional data set without losing any information.

Lastly, we illustrated some initial applications of MAF applied to combining
21 concurrent annual tree ring time series for a region in western
North America, covering the period 1850-1999, with the goal of
extracting common time trend information. Regional tree ring time
series data are believed to be imperfect proxies for regional weather
time series, such as average annual temperature, and in a subsequent
paper we are investigating the calibration between regional
temperature time series and regional tree ring proxy time series for
these and other regions of the globe. An important step in the
calibration is the extraction of common time trend information from
the proxy data.


\bibliographystyle{plainnat}
\bibliography{MAF}

\newpage
\section*{Supplemental section}
\appendix
\section{Getting general MAF coefficients}
\label{app:coeffs}

We present an alternative method for deriving the MAF coefficients under the
general model given in \eq{eq:22mod}. To do this, we first develop the case where all input time series have the same noise
level we get the following covariance for $\bm Z(t)$,
\begin{equation}
  \label{eq:34}
  \Sigma_Z = \bm b \bm b' + \rho \bm 1_p\bm 1_p' + (1-\rho)\bm I
\end{equation}
where $\rho$ is the common cross-correlation across all the time
series. The lagged covariance structure is given in \ref{eq3}. The PCs are given by the eigenvectors of
$\bm\Sigma_Z$, whereas the MAFs are the eigenvectors of
$\bm\Sigma_Z^{-1/2}\bm\Sigma_{\Delta Z}\bm\Sigma_Z^{-1/2}$.

First consider the special case where $\bar b = \sum_{i=1}^pb_i =
0$. This implies that $\bm b' \bm 1_p = 0$ and the eigenvectors are
$\bm b$, $\bm 1_p$ and all the vectors perpendicular to these two. The
corresponding eigenvalues are $(|\bm b|^2, \rho p, 0) + 1-\rho$
where the last eigenvalue is repeated $p-2$ times. If $|\bm b|^2 > \rho
p$, PC1 will be $\bm b$. PC2 will then be $\bm 1_p$, unless $\rho<0$
in which case PC2 will be in the aforementioned nullspace. Lastly, if
$|\bm b|^2 < \rho p$, PC1 will be $\bm 1_p$.

Another special case if where $\rho=0$. Here
the highest eigenvalue and corresponding eigenvector will be
proportional to $\bm b$ while all the others will be perpendicular to
$\bm b$ for both MAF and PCA.

In the general case where $\bar b \neq 0$ and $\rho \neq 0$, notice that $\bm b \bm b' + \rho \bm 1_p \bm 1_p'$ is rank 2. So the
dimensionality of that nullspace is $p-2$. All vectors in this nullspace
will have an eigenvalue of $1- \rho$. The remaining two eigenvectors
are found by assuming a
general structure of the eigenvectors, $\bm v = a_1 \bm 1_p + a_2 \bm b$.
It then follows that
the eigenvectors/eigenvalues of $\Sigma_Z$ are given by
\begin{align}
  \label{eq:35}
  v_1 =& \left(\frac{\rho p - |\bm b|^2 + \Delta}{2\bar{\bm b}p}\right)\bm 1_p + \bm b, \quad
  & \lambda_1 = \frac{\rho p + |\bm b|^2 + \Delta}{2} + 1 - \rho
  \nonumber\\
  v_2 =& \left(\frac{\rho p - |\bm b|^2 - \Delta}{2\bar{\bm b}p}\right)\bm 1_p + \bm b, \quad
  & \lambda_2 = \frac{\rho p + |\bm b|^2 - \Delta}{2} + 1 - \rho
\end{align}
where
\begin{align}
  \label{eq:36}
\Delta^2 =& (|\bm b|^2 - \rho p)^2 + 4(\bar{\bm b}p)^2\rho,
\end{align}
where $|\bm b|^2$ is the squared sum of the SNRs of the input time series.

The vectors in the nullspace have mean equal to
zero. This can be seen by considering any eigenvector $\bm v \in
\mathcal{N}(\bm b \bm b' + \rho \bm 1_p \bm 1_p')$,
\begin{align}
  \label{eq:nullspace}
(\bm b \bm b' + \rho \bm 1_p \bm 1_p')\bm v = 0 \nonumber \\
(\bm b'\bm v) \bm b + \rho \bar{v}p \bm 1_p = 0.
\end{align}
Because $\bm b$ is in general not equal to $\bm 1_p$, we have $\bm
b'\bm v = \bar{v} = 0$

To get the eigenvectors corresponding to the MAFs, consider also the
lagged covariance matrix,
\begin{align}
  \label{eq:39}
  \Sigma_{\Delta Z} =& k_f \bm b \bm b' + k_{\varepsilon} (\rho \bm 1_p \bm 1_p'
  + (1-\rho)\bm I)
\end{align}

Letting $\bm \Sigma_Z^{-1} = \bm\Gamma \bm D^{-1}\bm\Gamma ' = \bm H
\bm H'$, where $\bm H = \bm\Gamma
\bm D^{-1/2}$. Furthermore, because the optimal SNR in \eq{eq:R}
does not depend on $k_f$ as long as $k_f < k_{\varepsilon}$, we can
set $k_f = 0$, w.l.o.g. This is because the optimal SNR coefficients
for each time series is equivalent to the MAF1 coefficients, which is the
eigenvector corresponding to the smallest eigenvalue of
\begin{equation}
  \label{eq:orgiCov}
  \tilde{\bm \Sigma}_{\Delta} = \bm H'  \left(\rho \bm 1_p \bm 1_p' +
    (1-\rho) \bm I\right) \bm H,
\end{equation}
where the coordinate system has been rotated such that $\bm \Sigma_Z =
\bm I$. The MAF coefficients will change but the resulting MAF factors
will not under this rotation, as shown in Lemma \ref{lem:transform}.

 Now, let $\bar{\bm u}_i$ be the mean of the
normalized version of the vectors in \eq{eq:35}, $\lambda_i$ be the
corresponding eigenvectors. By
letting $\Lambda$ be the diagonal matrix with
$\frac{(1 - \rho)}{\sqrt{\lambda_i}}$ along the diagonal and
$c_i = \frac{\rho p\bar{\bm u}_i}{\sqrt{\lambda_i}}$, we can
recast this in a more familiar form,
\begin{equation}
\label{simpleForm}
  (\tilde{\Sigma}_{\Delta})_{ij} = \bm \Lambda + \bm c \bm c'.
\end{equation}

A closer look at this matrix will reveal that $c_i = 0, \, \forall
i>2$. Furthermore, $\bm \Lambda_{ii} = 1, \, \forall i>2$. This means that we can
decompose the matrix as follows,
\begin{equation}
  \label{eq:matrix}
  \begin{bmatrix}
    \bm A & \bm 0_{2\times (p-2)} \\
    \bm  0_{(p-2) \times 2} & \bm 1_{(p-2)\times (p-2)}.
  \end{bmatrix}
\end{equation}

And because $\bm A$ is symmetric and $2\times 2$, its eigenvalues/eigenvectors can be
found in closed form. The remaining eigenvectors can be made the
standard basis vectors $\bm e_i = (0_{1},..., 0_{i-1}, 1, 0_{i+1}, ...,
0_{p}), \, \forall i>2$. In particular, by solving
\begin{equation}
  \label{eq:eigen2by2}
  \bm A \begin{bmatrix} x \\ y \end{bmatrix} = \begin{bmatrix}
    \lambda^{-1}_1\left( 1-\rho + \rho p^2\bar{\bm u}^2_1\right) &
    \frac{\rho p^2}{\sqrt{\lambda_1\lambda_2}}\bar{\bm u}_1\bar{\bm u}_2 \\
    \frac{\rho p^2}{\sqrt{\lambda_1\lambda_2}}\bar{\bm u}_1\bar{\bm u}_2
    & \lambda^{-1}_2\left( 1-\rho + \rho p^2\bar{\bm u}^2_2\right) \end{bmatrix}\begin{bmatrix} x \\ y \end{bmatrix} = \begin{bmatrix} a & b \\ b & d \end{bmatrix}\begin{bmatrix} x \\ y \end{bmatrix} =
  \mu \begin{bmatrix} x \\ y \end{bmatrix},
\end{equation}
we find that the eigenvalues/eigenvectors are
\begin{equation}
  \label{eq:2by2eigen}
  \begin{bmatrix} x\\ y\end{bmatrix} = \frac{1}{\sqrt{b^2 + (\mu -
    d)^2}} \begin{bmatrix} \mu - d \\ b \end{bmatrix},  \quad
   \mu_1, \mu_2 = \frac{a+d \pm \sqrt{(a-d)^2 + 4b^2}}{2},
\end{equation}
where we are interested in the smallest eigenvalue, i.e. where we
subtract the term involving the discriminant.

Now, let $\tilde{\bm w}$ be the full vector in $\R^p$ with zeroes everywhere except
in the first two entries which take the values $x$ and $y$. To get the
values of each coefficient in the basis of the original time series, we do an inverse transformation,
\begin{equation}
  \label{eq:inverseBasisAppendix}
  \bm w = (\bm H')^{-1} \tilde{\bm w} = \bm H \tilde{\bm w} = \begin{bmatrix}  \bm h_1 & \bm h_2 &
    \ldots &
    \bm h_p \end{bmatrix} \begin{bmatrix} x \\ y \\ 0 \\ \vdots \\ 0 \end{bmatrix}= x \bm h_1 + y \bm h_2.
\end{equation}

Note that this expression is a linear combination of PC1 and
PC2. Furthermore, the largest eigenvalue of the two in
\eq{eq:2by2eigen} is equal to 1, just like the other degenerate
eigenvalues. This leaves only one non-degenerate eigenvalue, which
can be interpreted as there being only one signal present in different
strenths.

The result in \eq{eq:inverseBasisAppendix} can be used to obtain MAF1 in a more general setting,
where the noise is of unequal variance,
\begin{equation}
  \label{eq:41}
  \bm z_t = \bm b f_t + \bm \varepsilon_t, \: \: \text{with}
  \: \: (\varepsilon_t)_i \sim (0, \sigma_i^2) \quad \forall t,
\end{equation}
 by taking advantage of the
fact that MAF is preserved under linear transformations.
We
can write the modified covariance matrix as
\begin{equation}
  \label{eq:222}
\bm A' \bm\Sigma_Z\bm A = \bm A' \left( \tilde {\bm b} \tilde{\bm b}'
  + \rho \bm 1_p\bm 1_p' + (1-\rho)\bm I
\right) \bm A ,
\end{equation}
where $\tilde{\bm b}= \bm b/{\bm \sigma}$ and $\bm \sigma^2$ is the
vector is noise variances. Similarly for the lagged covariance matrix.

It then follows that the eigenvalue equation to be solved is
\begin{equation}
  \label{eq:40}
  \bm \Sigma_Z^{-1/2} \bm \Sigma_\Delta \bm \Sigma_Z^{-1/2} \tilde{\bm w} = \lambda \tilde{\bm w},
\end{equation}
where $\tilde{\bm w} = \bm A \bm w $. But $\tilde{\bm w}$ is already given in
\eqref{eq:inverseBasisAppendix}, and thus $\bm w = \bm A^{-1} \tilde{\bm w}$, which are the
MAF1 coefficients in the original coordinate system.

This this general case with unequal noise variance, the MAF will not be a linear
combination of PC1 and PC2. The reason for this is that the
vectors in the nullspace of the new covariance matrix' rank-2
update will not in general be an eigenvector of the full covariance
matrix, with the unequal variance terms in the diagonal. This means
that the eigenvalue problem cannot be rewritten in a form similar to
the one given in \eqref{eq:matrix}. In fact, the PCA eigenvectors do
not even exist in closed form, but must be obtained by solving a determinant
equation for the eigenvalues. This problem is explored in
\citet{Arbenz1988a, Bunch1978} and the references therein.


\section{Proofs}
\label{app:proofs}

\begin{proof}[Proof of Lemma \ref{lem:expLikeli}]
Under $H_0$ and the Gaussian assumption, the log-likelihood
\begin{equation}
  \label{eq:13a}
  \ln(L_0) = -\frac{1}{2}\ln(|\bm \Sigma_\varepsilon |) - \frac{1}{2}\sum_{t=1}^n\bm z(t)'\bm \Sigma_\varepsilon^{-1}\bm z(t) - \frac{p}{2}\ln(2\pi),
\end{equation}
where we assume that the mean of $\bm z_t$ is zero without loss of generality.

Now, the likelihood under the alternative hypothesis
\begin{equation}
  \label{eqa:14}
  \ln(L_A) = -\frac{1}{2}\ln(|\bm \Sigma_\varepsilon |) - \frac{1}{2}\sum_{t=1}^n(\bm z(t) - \bm
  b f(t))'\bm \Sigma_\varepsilon^{-1}(\bm z(t) - \bm
  b f(t)) - \frac{p}{2}\ln(2\pi).
\end{equation}

The likelihood ratio is
\begin{align}
  \label{likelihoodRatio}
\ln(L_A) - \ln(L_0) = &-\frac{1}{2}\sum_{t=1}^nf^2(t)\bm b \bm \Sigma_\varepsilon^{-1}
\bm b + \sum_{t=1}^nf(t)\bm z(t)'\bm \Sigma_\varepsilon^{-1}\bm b
\nonumber \\
 = &- \frac{1}{2}\bm b' \Sigma_\varepsilon^{-1} \bm b + \sum_{t=1}^nf(t)\bm z(t)'\bm \Sigma_\varepsilon^{-1}\bm b
\end{align}
where the last equality holds because of the unit
squared sum of $f(t)$. Now, if we substitute
$\bm z(t)$ for the alternative model and taking expectations under
either model, we get
\begin{align}
  \label{likelihoodRatio2}
E[\ln(L_A) - \ln(L_0)] = & -\frac{1}{2}\bm b' \Sigma_\varepsilon^{-1}
\bm b + E\left[\sum_{t=1}^nf(t)(\bm b f(t) - \bm\varepsilon(t))'\bm
  \Sigma_\varepsilon^{-1}\bm b \right]
\nonumber \\
 = &  \frac{1}{2}\bm b' \Sigma_\varepsilon^{-1}
\bm b,
\end{align}
where the term involving $\bm\varepsilon$ is zero after taking the
expectation.

Now, we already proved that MAF1 maximizes the model signal-to-noise
ratio,
\begin{equation}
  \label{eq:18}
  SNR = \frac{(\bm w' \bm b)^2 }{\bm w' \bm \Sigma_\varepsilon \bm w}.
\end{equation}

Making the change of coordinates $\bm u = \Sigma_\varepsilon^{1/2} \bm w$, using the
familiar spectral decomposition for the square root, gives the SNR
representation
\begin{equation}
  \label{eq:23}
  \frac{(\bm u'\bm \Sigma_\varepsilon^{-1/2} \bm b)^2 }{\bm u'  \bm u}.
\end{equation}

Using Cauchy-Schwartz theorem, the normalized vector, $\bm u$, which maximizes SNR
is parallel to $\bm \Sigma_\varepsilon^{-1/2} \bm b$. And so in the original
coordinate system,
\begin{equation}
  \label{eq:24}
  \bm w_{MAF} = \bm \Sigma_\varepsilon^{-1}\bm b
\end{equation}.

Substituting the expression for these MAF1 coefficients in our
definition for SNR gives
\begin{align}
  SNR_{optimal} = &\frac{(\bm b' \bm \Sigma_\varepsilon^{-1} \bm b)^2}{\bm b'
    \bm\Sigma_\varepsilon^{-1}\bm\Sigma_\varepsilon\bm\Sigma_\varepsilon^{-1} \bm b} \nonumber \\
= & \bm b' \bm \Sigma_\varepsilon^{-1} \bm b,
\end{align}
which is proportional to the expected likelihood ratio test statistic.
\end{proof}

\begin{prop}
\label{propMAF}
The MAF-transformation matrix, $\bm W_{MAF} = [\bm w_1, \bm w_2, ...,
\bm w_p]$ , contains the eigenvectors
of $\bm S^{-1/2} \bm S_{\Delta} \bm S^{-1/2}$.
\end{prop}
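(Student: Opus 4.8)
The plan is to reduce the defining optimization for MAF to a generalized Rayleigh‑quotient problem and then symmetrize it with $\bm S^{-1/2}$. First I would use the identity already recorded in the proof of Proposition~\ref{propMonotone}: writing $\bm S = \operatorname{Cov}(\bm Z(t))$ and $\bm S_\Delta = \operatorname{Cov}(\Delta\bm Z(t))$, the lag‑$1$ autocorrelation of the combined series $\bm w'\bm Z(t)$ is
\[
r(\bm w) \;=\; 1 - \frac{\operatorname{Var}\!\big(\bm w'\Delta\bm Z(t)\big)}{2\operatorname{Var}\!\big(\bm w'\bm Z(t)\big)} \;=\; 1 - \frac{\bm w'\bm S_{\Delta}\bm w}{2\,\bm w'\bm S\bm w},
\]
since $\operatorname{Var}(\bm w'\bm Z(t)) = \bm w'\bm S\bm w$ and $\operatorname{Var}(\bm w'\Delta\bm Z(t)) = \bm w'\bm S_\Delta\bm w$. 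Hence maximizing $r(\bm w)$ over $\bm w\in\R^p$ is equivalent to minimizing the scale‑invariant quotient $\rho(\bm w) = \bm w'\bm S_\Delta\bm w / \bm w'\bm S\bm w$, i.e.\ to solving the generalized eigenproblem $\bm S_\Delta\bm w = \lambda\bm S\bm w$ for the smallest $\lambda$.

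Next, since $\bm S$ is symmetric positive definite, its symmetric square root $\bm S^{1/2}$ and inverse $\bm S^{-1/2}$ exist, and the substitution $\bm u = \bm S^{1/2}\bm w$ converts the problem into minimizing $\bm u'\bm M\bm u / \bm u'\bm u$ with $\bm M := \bm S^{-1/2}\bm S_\Delta\bm S^{-1/2}$ symmetric. By the Courant–Fischer min–max theorem the minimizer is the eigenvector $\bm u_1$ of $\bm M$ for its smallest eigenvalue, so the leading MAF coefficient vector is $\bm w_1 = \bm S^{-1/2}\bm u_1$, i.e.\ $\bm S^{-1/2}$ applied to the leading eigenvector of $\bm M$.

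For the remaining columns I would run the standard deflation argument: the $k$‑th MAF maximizes autocorrelation among combinations whose factor $\bm w'\bm Z(t)$ is uncorrelated with the first $k-1$ MAF factors, i.e.\ $\bm w'\bm S\bm w_j = 0$ for $j<k$; under the change of variables this reads $\bm u'\bm u_j = 0$, so by the min–max characterization $\bm u_k$ is the eigenvector of $\bm M$ for its $k$‑th smallest eigenvalue and $\bm w_k = \bm S^{-1/2}\bm u_k$. Assembling the columns, $\bm W_{MAF} = \bm S^{-1/2}[\bm u_1,\dots,\bm u_p]$, so the columns of $\bm W_{MAF}$ are — up to the fixed nonsingular map $\bm S^{-1/2}$ and the sign/normalization conventions of Algorithm~\ref{alg:MAFvector} — the eigenvectors of $\bm S^{-1/2}\bm S_\Delta\bm S^{-1/2}$, equivalently they solve $\bm S_\Delta\bm w = \lambda\bm S\bm w$; invariance of the resulting factors under this linear reparametrization is exactly Lemma~\ref{lem:transform}.

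The main obstacle is not the eigen‑algebra but the bookkeeping around ordering and constraints: one must check that ``maximal autocorrelation'' corresponds to the \emph{smallest} rather than the largest eigenvalue of $\bm M$ (because of the minus sign in the formula for $r(\bm w)$), that the mutual‑uncorrelatedness constraint on the MAF factors is precisely what transports to orthogonality of the $\bm u_k$ through the $\bm S^{1/2}$ substitution, and that when $\bm M$ has repeated eigenvalues the eigenvectors, hence the MAFs, are pinned down only up to an orthogonal rotation within the eigenspace. I would also state explicitly that $\bm S\succ 0$ and $\bm S_\Delta\succeq 0$ guarantee the quotient is bounded and the extrema attained.
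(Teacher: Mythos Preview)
Your proposal is correct and follows essentially the same route as the paper: reduce the autocorrelation maximization to minimizing the generalized Rayleigh quotient $\bm w'\bm S_\Delta\bm w / \bm w'\bm S\bm w$, whiten via $\bm u=\bm S^{1/2}\bm w$, and read off the successive eigenvectors of $\bm M=\bm S^{-1/2}\bm S_\Delta\bm S^{-1/2}$ under orthogonality constraints, then back-transform by $\bm S^{-1/2}$. The paper simply cites \citet{muir} where you invoke Courant--Fischer, and it omits your explicit derivation of $r(\bm w)=1-\tfrac{1}{2}\rho(\bm w)$ and your remarks on eigenvalue multiplicity and definiteness, but the argument is the same.
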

\begin{proof}[Proof of Proposition \ref{propMAF}]
By definition, the MAF1 vector of coefficients, $\bm w_1$, minimizes
\begin{equation}
  \label{eq:optCoeffs}
  \frac{\bm w_1' \bm S_{\Delta} \bm w_1}{\bm w_1' \bm S \bm w_1}.
\end{equation}

By letting $\bm S^{1/2} \bm w_1 = \bm u_1$ we get
\begin{equation}
  \label{eq:15b}
  \frac{\bm u_1' \bm S^{-1/2} \bm S_{\Delta} \bm S^{-1/2}\bm u_1}{\bm u_1'
    \bm u_1},
\end{equation}
which is equivalent to minimizing
\begin{align}
 & \bm u_1' \bm S^{-1/2} \bm S_{\Delta} \bm S^{-1/2}\bm u_1
 \nonumber \\
 \text{subject to} \quad
 & \bm u_1'\bm u_1 = 1.
\end{align}

Following \citet{muir}, the minimizing vector is the eigenvector with
the lowest eigenvalue. Furthermore, the eigenvector with the $k^{th}$
smallest eigenvalue minimizes
\begin{align}
 & \bm u_k' \bm S^{-1/2} \bm S_{\Delta} \bm S^{-1/2}\bm u_k
 \nonumber \\
 \text{subject to} \quad
 & \bm u_k'\bm u_k = 1 \nonumber\\
 \text{and} \quad & \bm u_k'\bm u_i = 0 \quad \forall i<k.
\end{align}

$\bm u_k$ corresponds to MAF$k$, after MAF$i, \: \forall i < k$, has been
projected out of the data. The linear transformation $\bm w_k = \bm
S^{-1/2} \bm u_k$ gives the eigenvectors in the original coordinate system.
\end{proof}

\newpage
\newtheorem*{nonumberThm}{Theorem \ref{theorem1a} (Simplified)}
\begin{nonumberThm}
\label{theorem1a_simplified}
Consider a set of time series $Z_n(t) \in \R^p$, such that
\begin{align}
  \label{eq:21mod2}
  \bm Z_n(t) =& f_n(t) \bm b + \bm \varepsilon_n(t) \quad t= 1, \dots, n
  \nonumber \\
  \bm \Delta \bm Z_n(t) =& \bm Z_n(t) - \bm Z_n(t+1) =  \Delta f_n(t) \bm b + \Delta \bm \varepsilon_n(t)
\end{align}
with $f_n(t) \in \R \:\: \forall t = 1,2,...,n , \bm b, \bm \varepsilon_n(t)  \in \R^p\:\: \forall t = 1,2,...,n $, $\Delta
\bm \varepsilon_n = \bm \varepsilon_n(t) - \bm \varepsilon_n(t+1)$,
and $\Delta f_n(t) = f_n(t) - f_n(t+1)$. Residual time series $\bm\varepsilon_n$ is a
weakly stationary $p$-variate time series and the associated autocovariance is
absolutely summable.
 Then,
 \begin{align}
 \bm S_{n} \overset{p}{\to}&  \bm\Sigma \text{  as }n \to \infty, \nonumber\\
\bm S^{-1/2} \bm S_\Delta \bm S^{-1/2} \overset{p}{\to}& \bm \Sigma^{-1/2}  \bm \Sigma_\Delta \bm \Sigma^{-1/2} \text{  as }n \to \infty \nonumber\\
\end{align}
\end{nonumberThm}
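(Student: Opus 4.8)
The plan is to write both $\bm S_n$ and $\bm S_\Delta$ as a sum of a deterministic ``signal'' term, a random ``noise'' term, and ``cross'' terms, to show the signal terms are constant (or converge by the stated moment conditions on $f_n$), the noise terms converge by a weak law of large numbers for weakly stationary sequences with absolutely summable autocovariances, and the cross terms are negligible; the final claim then follows from the continuous mapping theorem. Concretely, since $\tfrac1n\sum_t f_n(t)=0$ the sample mean of $\bm Z_n$ is $\bar{\bm Z}_n=\bar{\bm\varepsilon}_n$, so centering is asymptotically harmless once $\bar{\bm\varepsilon}_n\overset{p}{\to}\bm 0$, and expanding the (centered) sample covariance gives
\[
\bm S_n=\Big(\tfrac1n\sum_t f_n^2(t)\Big)\bm b\bm b'+\tfrac1n\sum_t\bm\varepsilon_n(t)\bm\varepsilon_n(t)'+(\text{cross terms})+o_p(1),
\]
where the first term equals $\bm b\bm b'$ for every $n$ because $\tfrac1n\sum_t f_n^2(t)=1$. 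The identical decomposition applies to $\bm S_\Delta$ with $\Delta f_n$, $\Delta\bm\varepsilon_n$ in place of $f_n$, $\bm\varepsilon_n$; there the signal term is $\big(\tfrac1{n-1}\sum_t[\Delta f_n(t)-\overline{\Delta f_n}]^2\big)\bm b\bm b'=a\,\bm b\bm b'$ by hypothesis, and $\Delta\bm\varepsilon_n$ inherits weak stationarity and absolute summability of autocovariances from $\bm\varepsilon_n$.

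The core lemma I would invoke is: if $\{X_t\}$ is weakly stationary with autocovariance $\gamma(\cdot)$ and $\sum_h|\gamma(h)|<\infty$, then $\text{Var}\big(\tfrac1n\sum_{t=1}^n X_t\big)=\tfrac1{n^2}\sum_{|h|<n}(n-|h|)\gamma(h)\le\tfrac1n\sum_h|\gamma(h)|\to0$, so $\tfrac1n\sum_t X_t\to E[X_t]$ in $L^2$, hence in probability. Applied coordinatewise to $\bm\varepsilon_n$ this gives $\bar{\bm\varepsilon}_n\overset{p}{\to}\bm 0$, and applied to the bivariate product sequences $\{\varepsilon_{n,i}(t)\varepsilon_{n,j}(t)\}$ it gives $\tfrac1n\sum_t\bm\varepsilon_n(t)\bm\varepsilon_n(t)'\overset{p}{\to}\text{Var}(\bm\varepsilon(t))=\bm\Sigma_\varepsilon$; the same argument for the differenced noise gives convergence of its sample covariance to $\text{Var}(\Delta\bm\varepsilon(t))$.

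For the cross terms, a representative entry is $\tfrac1n\sum_t f_n(t)\varepsilon_{n,i}(t)$, which has mean zero and variance $\tfrac1{n^2}\sum_{s,t}f_n(s)f_n(t)\gamma_i(s-t)$; bounding $|f_n(s)f_n(t)|\le\tfrac12(f_n^2(s)+f_n^2(t))$ and using $\tfrac1n\sum_s f_n^2(s)=1$ yields the bound $\tfrac1n\sum_h|\gamma_i(h)|=O(1/n)\to0$, so every cross term is $o_p(1)$; the differenced version is identical using $\tfrac1{n-1}\sum_t[\Delta f_n(t)-\overline{\Delta f_n}]^2=a$ to bound the analogue of $\tfrac1n\sum f_n^2$. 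Combining the three pieces, $\bm S_n\overset{p}{\to}\bm b\bm b'+\bm\Sigma_\varepsilon=:\bm\Sigma$ and $\bm S_\Delta\overset{p}{\to}a\,\bm b\bm b'+\text{Var}(\Delta\bm\varepsilon(t))=:\bm\Sigma_\Delta$. Since $\bm\Sigma$ is positive definite, the map $\bm S\mapsto\bm S^{-1/2}$ is continuous at $\bm\Sigma$ (spectral calculus on the positive-definite cone) and matrix multiplication is continuous, so the continuous mapping theorem for convergence in probability delivers $\bm S^{-1/2}\bm S_\Delta\bm S^{-1/2}\overset{p}{\to}\bm\Sigma^{-1/2}\bm\Sigma_\Delta\bm\Sigma^{-1/2}$.

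\textbf{Main obstacle.} The delicate point is convergence of the \emph{noise sample covariance}: weak stationarity of $\bm\varepsilon_n$ controls only second moments and does not by itself force $\tfrac1n\sum_t\varepsilon_{n,i}(t)\varepsilon_{n,j}(t)$ to converge. One needs an additional hypothesis on fourth-order dependence --- for instance that the bivariate product sequences $\{\varepsilon_{n,i}(t)\varepsilon_{n,j}(t)\}$ are themselves weakly stationary with absolutely summable autocovariances, which holds automatically for a Gaussian stationary process with summable $\gamma(\cdot)$ or for a linear process with summable coefficients and finite fourth moments --- so that the core lemma can be applied to them. Everything else (the constant signal terms, the cross-term bounds, the final continuous-mapping step) is routine.
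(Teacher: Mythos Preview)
Your approach is essentially identical to the paper's: both expand $\bm S_n$ (and $\bm S_{\Delta n}$) into a deterministic signal piece fixed by the normalization of $f_n$, a cross term shown to be $o_p(1)$, and a pure-noise sample covariance handled by a law of large numbers for weakly stationary sequences with summable autocovariances, after which the continuous mapping theorem gives the final statement (the paper organizes this explicitly as Part I: stationarity of $\Delta\bm\varepsilon$, Part II: consistency of $\bm S_n,\bm S_{\Delta n}$, Part III: continuous mapping to eigenvectors). Your write-up is in fact more careful than the paper's in two places --- the explicit variance bound for the cross term and your flagging of the fourth-order assumption needed so that $\tfrac1n\sum_t\varepsilon_{n,i}(t)\varepsilon_{n,j}(t)$ converges --- whereas the paper simply cites Doob and Durrett at that step without isolating the additional regularity.
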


\begin{proof}[Proof of Theorem \ref{theorem1a}]
Three parts make up this proof: 1. Stationarity of differenced
time series, 2. Convergence in probability of $\bm S_n$ and $ \bm S_{\Delta n}$, the
sample cross-correlation and lagged cross-correlation to their model counterparts.
3. Consistency of MAF and PCA coefficients to their model counterparts.

We begin by recognizing that since $\bm\varepsilon_n$ is a
weakly stationary $p$-variate time series, we have
\begin{align}
  E[\bm \varepsilon_n (t)] = &\bm 0 \nonumber \\
  Cov[\bm \varepsilon_n(t)] = &\bm\Sigma_\varepsilon \nonumber \\
  Cov[\Delta \bm \varepsilon_n(t)] = &\bm\Sigma_{\Delta\varepsilon},
\end{align}
 with the assumption of lagged summability,
\begin{equation}
  \label{eq:20}
  \sum_{\tau = 0}^{\infty} |\bm \gamma_{\tau, i} |< \infty.
\end{equation}
where the lagged autocovariance for noise component $i$ is
 \begin{equation}
   \label{eq:19}
   \bm \gamma_{\tau, i} = Cov[\varepsilon_{n,i}(t), \varepsilon_{n, i}(t+\tau)].
 \end{equation}

Furthermore, let
\begin{align}
  \label{eq:Sn}
 \bm S_n =& \frac{1}{n}\sum_{t=1}^n[\bm Z_n(t) - \overline{\bm Z_n(t)}]
  [\bm Z_n(t) - \overline{\bm Z_n(t)}]' \nonumber \\
  \bm S_{\Delta n} =& \frac{1}{n}\sum_{t=1}^n\bm [\Delta \bm Z_n(t) -
  \overline{\Delta \bm Z_n(t)}] [\Delta \bm Z_n(t) - \overline{\Delta \bm Z_n(t)}]',
\end{align}
where $\overline{\bm Z_n(t)} = \sum_{i=1}^n \bm Z_n(t)$, similarly for
$\Delta \bm Z_n(t)$.
\newline\newline
\textbf{Part I: Stationarity of differenced time series:}\newline
We now show that
$\Delta
\bm\varepsilon_n(t) - \overline{\Delta \bm\varepsilon_n}$ is a zero
mean weakly stationary time series, using the weak stationarity of
$\bm \varepsilon_n(t)$. Let
\begin{equation}
  \label{eq:21}
  Cov[\bm\varepsilon(t), \bm\varepsilon(t+h)] = \bm\gamma(h),
\end{equation}
which is by definition not a function of $t$. Then
\begin{align}
  \label{eq:22}
Cov[\Delta\bm\varepsilon(t), \Delta\bm\varepsilon(t+h)] = &
Cov[\bm\varepsilon(t), \bm\varepsilon(t+h)] - Cov[\bm\varepsilon(t+1),
\bm\varepsilon(t+h)] + \nonumber \\ & \quad Cov[\bm\varepsilon(t+1), \bm\varepsilon(t+h+1)]
- Cov[\bm\varepsilon(t), \bm\varepsilon(t+h+1)] \nonumber \\
= & 2 \bm\gamma(h) - \bm\gamma(h+1) - \bm\gamma(h-1) \quad \text{for }
h > 0,
\end{align}
is not a function of $t$ and is thus also weakly stationary. It is
trivial to show that the differenced time series has zero mean.
\newline\newline
\textbf{Part II: Consistency of $\bm S_n$ and $\bm S_{\Delta n}$ }\newline
Substituting \eq{eq:21mod} in \eq{eq:8}, we get
\begin{align}
\bm S_n =&
  \frac{1}{n} \sum_{t=1}^n [(f_n(t) - \overline{f_n})\bm b +
  \bm\varepsilon_n(t) - \overline{\bm\varepsilon_n}][(f_n(t) - \overline{f_n})\bm b +
  \bm\varepsilon_n(t) - \overline{\bm\varepsilon_n}]' \nonumber \\
=&  \frac{1}{n} \sum_{t=1}^n [\bm b \bm b' f_n^2(t) + 2 f_n(t) \bm b
(\bm\varepsilon_n(t) - \overline{\bm\varepsilon_n})' +
(\bm\varepsilon_n(t) -
\overline{\bm\varepsilon_n})(\bm\varepsilon_n(t) - \overline{\bm\varepsilon_n})'].
\end{align}

As $n \to \infty$, The first term equals $\bm b \bm b'$ by definition, the second term
goes to zero in probability because the $p$-vector of the
time-averaged residuals goes to the zero vector in probability, and the third term goes to
$\bm \Sigma_\varepsilon$ because $\bm\varepsilon_n(t)$ is a weakly stationary
time series with zero mean \citep{Doob1953, Durrett2010}.

Thus,
\begin{equation}
  \label{eq:3}
  \bm S_n =\frac{1}{n} \sum_{t=1}^n [\bm Z_n(t) - \overline{\bm Z_n(t)}]
  [\bm Z_n(t)  - \overline{\bm Z_n(t)}]' \overset{p}{\to} \bm b \bm
  b' + \bm \Sigma_\varepsilon = E[\bm S_n]
\end{equation}

Now, consider
\begin{align}
  \label{eq:Sdn}
   \bm S_{\Delta n} =&  \frac{1}{n} \sum_t [\bm b \bm b' (\Delta f_n -
   \overline{\Delta f_n})^2(t) + 2 (\Delta f_n(t) - \overline{\Delta f_n}) \bm b
( \Delta \bm\varepsilon_n(t) - \overline{\Delta \bm\varepsilon_n})'(t) +
(\Delta \bm\varepsilon_n(t) - \overline{\Delta
  \bm\varepsilon_n})(\Delta \bm\varepsilon_n(t) - \overline{\Delta \bm\varepsilon_n})'].
\end{align}
Applying the same arguments and using the
weak stationarity of the differenced time series, we get
\begin{equation}
  \label{eq:3p}
  \bm S_{\Delta n} = \frac{1}{n} \sum_t [\Delta\bm Z_n(t) -
  \overline{\Delta\bm Z_n(t)}] [\Delta\bm Z_n(t) - \overline{\Delta\bm Z_n(t)}]' \overset{p}{\to} a\bm b \bm
  b' + \bm \Sigma_{\Delta\varepsilon} = E[\bm S_{\Delta n}].
\end{equation}

To wit,
\begin{align}
\bm S_n \overset{p}{\to}& E[\bm S_n] = \bm\Sigma = \bm b \bm b' + \bm \Sigma_\epsilon \text{  as }n \to \infty\nonumber\\
  \bm S_{\Delta n} \overset{p}{\to}& E[\bm S_{\Delta n}] = \bm\Sigma_\Delta = a\bm b \bm b' +
  \bm \Sigma_{\Delta\epsilon} \text{  as }n \to \infty,
\end{align}
\newline\newline
\textbf{Part III: Consistency of MAF and PCA coefficients}\newline
PCA and MAF coefficients are the eigenvectors of $\bm S$ and $\bm S^{-1/2} \bm S_\Delta \bm S^{-1/2}$ respectively, using a spectral
decomposition $\bm S^{-1/2} = \bm H \bm L^{-1/2} \bm H'$. The continuous
mapping theorem ensures that consistent estimates of the covariance
and lagged covariance matrices implies consistent estimates of MAF and
PCA coefficients, i.e. the coefficients will also converge to their
model values, the eigenvectors of the model covariance matrix, $\bm \Sigma$ for PCA
and $\bm \Sigma^{-1/2}  \Sigma_\Delta \bm \Sigma^{-1/2}$, with $\bm
\Sigma^{-1/2} = \bm\Gamma \bm D^{-1/2} \bm\Gamma'$.
\end{proof}

\begin{proof}[Proof of Theorem \ref{multipleSignalsTheorem}]

First we establish that MAF coefficient vectors $1$ to $q$ are linear combinations of the signal strength vectors $\bm b_i$. Then we show that CCA coefficients also has this property. Thus, these two methods have coefficients that span the same $q$-subspace in $\R^{p}$.

\textbf{Part I: MAF coefficient vectors}\newline
The definition of the first $q$ MAF factors solve the following sequence of problems,
\begin{align}
\label{conevxEqNoniid}
  \text{maximize} &
       \quad\rho(\bm a_i) = \frac{\bm a_i'\bm \Sigma_{\delta Z} \bm a_i}{\bm a_i'\bm \Sigma_Z \bm a_i} = k_\varepsilon \frac{\bm B \bm \Lambda \bm B' + \bm I}{\bm B  \bm B' + \bm I}\nonumber\\
  \text{subject to} & \quad  \bm a_i'\bm a_i = 1 \nonumber\\
   & \quad \bm a_i'\bm a_j = 0, \quad \forall j < i \leq q,
\end{align}
where we have changed coordinate system such that $\bm \Sigma_\varepsilon = \bm I$ and where $\Lambda$ is the matrix with entries $(k_i)/k_\varepsilon = \lambda_i$ for $i = 1, \dots, q$ in the diagonal and zero in the off-diagonals. For now assume that the columns of $\bm B$ are linearly independent.

We want to show that the first $q$ maximizing vectors are in the range of $\bm B$.

First, from the spectral theorem
\begin{equation}
  \bm B \bm B' = \bm U \bm D \bm U',
\end{equation}
where $U \in \R^{p \times p}$ whose last $p-q$ columns are perpendicular to $\bm b_i$ $\forall i = 1, \dots, q$.

So,
\begin{equation}
  \bm B \bm B' + \bm I = \bm U (\bm D + \bm I) \bm U',
\end{equation}
so the eigenvectors are preserved by adding the identity matrix.

Second, note that the first $q$ columns of $\bm U$, $\bm u_i \in R(\bm B)$ since we can write
\begin{equation}
  \bm B [\bm B' \bm U \bm D^{-1}] = \bm U.
\end{equation}

Third, note that for any $\bm x \in R(\bm B)$
\begin{equation}
\label{starEq}
  0 \leq \sum_{i=1}^q (\bm b_i'\bm x)^2 = \bm x \bm B \bm B' \bm x \leq \sum_{i=1}^q \lambda_i (\bm b_i'\bm x)^2 = \bm x \bm B \bm \Lambda \bm B' \bm x,
\end{equation}
and thus
\begin{equation}
\label{ineqA}
  \bm x' (\bm B \bm B' + \bm I) \bm x \leq \bm x' (\bm B \bm \Lambda \bm B' + \bm I) \bm x,
\end{equation}
with strict inequality iff $\bm x \in R(\bm B)$.

Now,
\begin{equation}
  \bm B \bm \Lambda \bm B' + \bm I = \bm V (\bm E + \bm I) \bm V',
\end{equation}
and since \eq{ineqA} holds for any vector in the range of $\bm B$, any eigenvector in $\bm U$ or $\bm V$ with eigenvalue greater than 1 must be in the range of $\bm B$.
So,
\begin{equation}
  \rho(\bm x) = k_\varepsilon \frac{\bm x \bm V (\bm E + \bm I) \bm V' \bm x}{\bm x \bm U (\bm D + \bm I) \bm U' \bm x} \geq k_\varepsilon.
\end{equation}

We can thus find $q$ linearly independent vectors in $R(\bm B)$, call them $\bm x_B$ such that $\rho(\bm x_B) > k_\varepsilon$. Any vectors in the null space of $B$ will have $\rho(\bm x_B) = k_\varepsilon$, and so these will appear after the set of vectors in $R(\bm B)$. If the rank of $\bm B \bm B' < q$ the proof follows the same arguments with a lower dimension substituted for $q$. In the original co-ordinate system where $\bm \Sigma_\varepsilon \neq \bm I$, the vectors $\bm a_i$ will be in the range of $\bm \Sigma_\varepsilon^{-1} \bm B$ as seen by a change of coordinate transform $\bm a = \tilde{\bm a} \bm \Sigma_\varepsilon^{-1/2}$.

 \textbf{Part II: CCA coefficient vectors}\newline

Now if we consider Canonical Correlation Analysis (CCA), we look for
the linear combination of the columns of $\bm Z$ which maximizes a linear
combination of the signals $F = (f_1(t), f_2(t), ..., f_k(t))$. Without loss of generality,
\begin{equation}
\label{eq:FI}
\Sigma_F = \bm I,
\end{equation}
and thus
\begin{equation}
\label{eq:ZF}
\Sigma_{ZF} = \bm B
\end{equation}

By definition the canonical variables of Z have linear weights given by the eigenvectors of
\begin{align}
 \Sigma_Z^{-1}\Sigma_{ZF}\Sigma_{F}^{-1}\Sigma_{FZ} \nonumber\\
 \end{align}
which, using \eq{eq:FI} and \eq{eq:ZF} and $\Sigma_F = \bm I$ become
\begin{align}
  \label{eq:6a}
  (\bm B \bm B' + \Sigma_\varepsilon)^{-1} \bm B \bm B'\nonumber\\
 \end{align}
which is equivalent to the sequence of problems
\begin{align}
\label{conevxEqNoniidCCA}
  \text{maximize} &
       \quad\phi(\bm a_i) = \frac{1}{1+\frac{1}{\sigma(\bm a_i)}} \nonumber\\
  \text{subject to} & \quad  \bm a_i'\bm a_i = 1 \nonumber\\
   & \quad \bm a_i'\bm a_j = 0, \quad \forall j < i < q.
\end{align}

But this is equivalent to

\begin{align}
\label{conevxEqNoniidCCA2}
  \text{maximize} &
       \quad \sigma(\bm a_i) = \frac{\bm a_i'\bm B \bm B\bm a_i}{\bm a_i'\bm \Sigma_\varepsilon \bm a_i}\nonumber\\
  \text{subject to} & \quad  \bm a_i'\bm a_i = 1 \nonumber\\
   & \quad \bm a_i'\bm a_j = 0, \quad \forall j < i < q.
\end{align}

Now, making the change of variable $\tilde{\bm a_i} = \bm \Sigma_\varepsilon^{1/2}a_i$
$\tilde{\bm B} = \bm\Sigma_\varepsilon^{-1/2}\bm B$ our CCA problem reduces to finding the eigenvalues of $\tilde{\bm B}\tilde{\bm B}'$, sorted in the diagonal matrix $\tilde{\bm D} = diag(\tilde{d}_1, \tilde{d}_2, \dots, \tilde{d}_q)$. By the spectral theorem
\begin{align}
\tilde{\bm B}\tilde{\bm B}' = & \tilde{\bm A}\tilde{\bm D}\tilde{\bm A} \nonumber \\
\bm A = &\bm\Sigma_\varepsilon^{-1} \bm B \bm B' \bm A \tilde{\bm D}^{-1} \nonumber\\
\bm A = & \bm\Sigma_\varepsilon^{-1} \bm B \bm U,
\end{align}
where $\bm U = \bm B' \bm A \tilde{\bm D}^{-1}$. Now we see that the eigenvectors are linear combinations of $\bm\Sigma_\varepsilon^{-1} \bm b_i, \quad \forall i \leq q$.
\end{proof}

\begin{proof}[Proof of Property \ref{pr:nonoise}]
A linear combination, $\bm a'\bm Z(t)$, of the $p$-vector time series $\bm Z(t)$ can be expressed as $\bm a'\bm Z(t) = \bm b'\bm f(t)$, where $\bm b'=\bm a' \bm B$, $\bm Z(t) = \bm B\bm f(t)$, $f(t)$ is the unknown underlying $q$-vector factor time series, and A is the unknown $p\times q$ loading matrix.  Let $r(\bm a)$ denote the unit-lag autocorrelation of the scalar time series $\bm a'\bm Z(t)$.  Then the orthogonality of the factors $\bm f(t)$ yields $r(\bm a) = \bm b'diag(\bm k)\bm b / \bm b'\bm b$ where $r$ is the $q$-vector of the underlying factor autocorrelations in decreasing order. Since $r(\bm a)$ is therefore a convex combination of the factor autocorrelations, $R(\bm a)$ cannot be greater than the largest of the underlying factor autocorrelations, $r_1$.  Therefore MAF-1, which maximizes $r(\bm a)$, yields factor loadings $b_i=0$ for $i>1$, and a MAF-1 time series $\bm a'\bm Z(t)$ is proportional to the underlying factor-1 time series $f_1(t)$.  For $q<p$, the MAF-1 optimizing coefficient $\bm a'$ for the linear combination of $\bm a'\bm Z(t)$ is not unique because the $p\times p$ cross-covariance of the vector time series $\bm Z(t)$ has rank $q$. But every linear combination that maximizes autocorrelation will yield a time series that is proportional to the underlying factor time series $f_1(t)$.  Similarly, successive orthogonal MAF time series, up to MAF-$q$, will evaluate to the corresponding successive underlying factor time series $f_2(t), ..., f_q(t)$.
\end{proof}


\end{document}